\numberwithin{equation}{section}
\def\BState{\State\hskip-\ALG@thistlm}
\newtheorem{thm}{Theorem}[section]
\newtheorem{lem}{Lemma}[section]
 \newtheorem{ass}{Assumption}
\newtheorem{proof}{Proof}[section]
\newcommand{\x}{\mathbf{x}}
\newcommand{\reals}{\mathbb{R}}
\newcommand{\integers}{\mathbb{Z}}
\newcommand{\naturals}{\mathbb{N}}
\newcommand{\prob}{\mathbb{P}}
\newcommand{\var}{{\rm Var }}
\newcommand{\expect}{\mathbb{E}}
\newcommand{\n}{\mathbf{n}}
\newcommand{\bfone}{\mathbf{1}}
\newcommand{\mcF}{\mathcal{F}}
\newcommand{\mcR}{\mathcal{R}}
\newcommand{\mcJ}{\mathcal{J}}
\newcommand{\mcX}{\mathcal{X}}
\newcommand{\mcY}{\mathcal{Y}}
\newcommand{\sfC}{\mathsf{C}}
\begin{document}

\begin{frontmatter}
\title{Coupling sample paths to the partial thermodynamic limit in stochastic chemical reaction networks}

\author{Ethan Levien$^1$}
\ead{levien@math.utah.edu}

\author{Paul C. Bressloff$^{1}$}
\ead{bressloff@math.utah.edu}

\address{$^1$Department of Mathematics, University of Utah, Salt Lake City, UT 84112 USA}

\begin{abstract}
Many biochemical systems appearing in applications have a multiscale structure so that they converge to piecewise deterministic Markov processes in a thermodynamic limit.  The statistics of the piecewise deterministic process can be obtained much more efficiently than those of the exact process. We explore the possibility of coupling sample paths of the exact model to the piecewise deterministic process in order to reduce the variance of their difference. We then apply this coupling to reduce the computational complexity of a Monte Carlo estimator. In addition to rigorous results concerning the asymptotic computational complexity of the Monte Carlo estimator, numerical simulations are performed on some simple biological models suggesting that significant computational gains are made. 
\end{abstract}

\begin{keyword}
chemical reaction networks, Monte Carlo, variance reduction, piecewise deterministic Markov process, Gillespie algorithm
\end{keyword}

\end{frontmatter}

%\tableofcontents

%%%%%%%%%%%%%%%%%%%%%%%%%%%%%%%%%%%%%%%%%%%%%%%%%%%%%%%%%%%%%%%%
\section{Introduction}

Large stochastic biochemical reaction networks are a popular modeling framework for investigating cellular processes \cite{bressloff2014book}, but the complexity and population sizes involved in realistic models pose major computational challenges. However, when there is a separation of scales, such models lend themselves to a number of model reduction techniques that are useful for course grained analysis.  One example occurs when there is a separation in species abundances \cite{crudu2009,eve2007}. If some subset of chemical species in a reaction network are extremely abundant, then reaction channels involving those species will generally occur much faster than reactions involving less abundant species. In this case, one can take a partial thermodynamic limit to obtain a piecewise deterministic Markov process (PDMP). A number of recent studies have provided rigorous errors bounds for this type of reduction \cite{jahnke2012,crudu2009,Chevallier2016,kurtz1972}. While the PDMP yields useful information about stochastic effects of the rare species, quantitative information about the stochastic fluctuations of the abundant species is lost. On the other hand, in many systems, particularly those with feedback between the rare and abundant chemical species, there is an interest in quantifying the stochastic effects due to these fluctuations \cite{Anderson2015book}. A common method for resolving these fluctuations is the diffusion approximation. While the diffusion approximation is often thought to be computationally advantageous, recent work on classically scaled population models has shown that this methods yields only moderate computational gains \cite{anderson2015b}. Moreover, the error between the PDMP and the exact model is fixed and it is sometimes desirable to control this quantity, especially when the separation of scales is only moderate.  

An alternative to multiscale reduction techniques is to develop methods for accelerating stochastic simulation algorithms such as the Gillespie algorithm \cite{Gillespie77,Gillespie07,Zeiser08}. For example, there have been numerous studies on the method of $\tau$-leaping in an effort to accelerate simulations of continuous time Markov chains \cite{Cao06,anderson2011}. More recently, multi-level methods that couple tau-leaping approximations at different resolutions have been used to reduce variances in Monte Carlo estimators \cite{anderson2014,anderson2012,anderson2015b}. Variance reduction techniques that utilize probabilistic couplings have also appeared earlier in the context of SDEs and Markov Chain Monte Carlo methods \cite{goodman2009,giles2015}. While there has been some work that leverages multiscale reduction techniques for Monte Carlo estimators \cite{Ganguly2014}, to our knowledge the idea of using these techniques directly as a variance reduction tool has not been studied.

In this paper we explore the idea of coupling reduced models to exact models as a variance reduction tool for Monte Carlo estimators.  We develop a new efficient Monte Carlo estimator for multiscale chemical reaction networks near a partial thermodynamic limit. The key insight is that, since only a small fraction of the degrees of freedom of the PDMP are stochastic, one can efficiently compute statistics of the process without Monte Carlo simulations using non Monte Carlo based techniques. On the other hand, if one wants to resolve demographic noise in the full model it is necessary to perform a large number of Monte Carlo simulations. By coupling the full stochastic model to the PDMP one can reduce the variance by a factor inversely related to the system size, and hence a smaller number of simulations need to be performed to achieve a given error tolerance. For practical applications the desired error tolerance of the Monte Carlo estimator scales with this factor. Hence the coupled Monte Carlo estimator has the potential to speed up computations by a fractional power of the error tolerance. Our results extend the idea of variance reduction developed in \cite{anderson2014,anderson2012} and provide a new computational application of the theory developed in previous work on PDMP approximations, or partial thermodynamic limits \cite{jahnke2012,crudu2009,Chevallier2016}.
\par
The paper is organized as follows. In Section \ref{sec_basics} we give a introduction to stochastic chemical reaction networks in the single scale, or classical setting. This includes an introduction to the random time change representation of Kurtz, the Gillespie algorithm and Monte Carlo simulations. In Section \ref{sec_ptdl} we introduce multiscale models and show how to obtain a piecewise deterministic Markov process by taking a partial thermodynamic limit. Section \ref{sec_mce} introduces a framework for variance reduction in Monte Carlo estimators using the theory in Section \ref{sec_ptdl}. Finally, Section  \ref{sec_coupling} and  \ref{sec_couplingfb} give an analysis of the computational complexity of the coupled Monte Carlo estimator, while numerical examples are presented in Section \ref{sec:numerical}. Non-Monte Carlo based methods for obtaining statistics of the reduced model are presented in Appendix \ref{sec_nonmce}, while our technical results are proved in Appendix \ref{ap:proof1} and \ref{ap:proof2}.

%%%%%%%%%%%%%%%%%%%%%%%%%%%%%%%%%%%%%%%%%%%%%%%%%%%%%%%%%%%%%%%%
\section{Representation and simulation of stochastic chemical reaction networks in the classical setting}\label{sec_basics}

In this section we briefly introduce some background material pertaining to stochastic chemical reaction networks in the classical setting. Numerous books and articles provide a more comprehensive review of this material \cite{Anderson2015book,ethier86}. A chemical reaction network provides a mathematical description a set of interacting species, denoted $\mcX = \{\mcX_i,\, i=1,\ldots,d\}$. In general $d$ may be infinite and although our formulation is completely general, we are mostly concerned with examples where $d$ is finite. These interactions are defined by a set of single-step reactions $\mcR = \{\mcR_j,\, j=1,\ldots,p\}$. Let $x_i$ be the number of molecules of $\mcX _i$ and set $x=(x_1,\ldots,x_d)$. The $j$-th reaction takes the form
\[{\mathcal R}_j:\, \sum_{i=1}^dk_{j,i}^{\rm in}\mcX_i\rightarrow \sum_{i=1}^dk_{j,i}^{\rm out}\mcX_i,\]
where $k_{j,i}^{\rm in},k_{j,i}^{\rm out}$ are known as {\em stochiometric coefficients}. When such a reaction occurs the state $\n$ is changed according to
\[x_i\rightarrow x_i+k_{j,i},\quad k_{j,i}=k_{j,i}^{\rm out}-k_{j,i}^{\rm in}.\]
More complicated multi-step reactions can always be decomposed into these fundamental single-step reactions with appropriate stochiometric coefficients. In practice, most reactions involve collisions between pairs of molecules, so that we take $\sum_ik_{j,i}^{\rm in}=1$ or $2$. 

In the case of a large number of molecules for each species, one often describes the dynamics of a reaction network in terms of a deterministic kinetic or rate equation involving the concentrations $z_j=x_j/\Omega$ -- the {\em law of mass action} \cite{bressloff2014book}. Here $\Omega$ is a dimensionless quantity representing the system size, which in gene networks is typically taken to be the characteristic number of proteins. Alternatively, it could represent some volume scale factor. Currently, we are in the \emph{classical setting} where $\Omega$ scales the abundances of every species. This is in contract to the \emph{multiscale setting} where only the size of some subsystem is scaled with the system size.   \par  For a set of $p$ reactions, the kinetic equations take the form (strictly speaking in the thermodynamic limit $\Omega \rightarrow \infty$)
\begin{equation}
\label{ostoch}
\frac{dz_i}{dt}=\sum_{j=1}^p k_{j,i}\left [\zeta_j \prod_{i=1}^dz_i^{k^{\rm in}_{j,i}}\right ]\equiv \sum_{j=1}^pk_{j,i} \alpha_j(z),
\end{equation}
where $\zeta_j$ is a constant that depends on the probability that a collision of the relevant molecules actually leads to a reaction, and $z=x/\Omega$. The product of concentrations is motivated by the idea that in a well-mixed container there is a spatially uniform distribution of each type of molecule, and the probability of a collision depends on the probability that each of the reactants is in the same local region of space. Ignoring any statistical correlations, the latter is given by the product of the individual concentrations. The functions $\alpha_j$ are known as transition intensities or {\em propensities}. Given this notation, it is straightforward to write down the corresponding chemical master equation for finite $\Omega$, which takes into account intrinsic fluctuations in the number of molecules (demographic noise). Setting $P(x,t)=\prob(x(t)=x|x(0)=x_0)$, the chemical master equation is
\begin{equation}
\label{masterg}
\frac{dP(x,t)}{dt}=\Omega \sum_{j=1}^p\left (\prod_{i=1}^d{\mathbb E}^{-k_{j,i}}-1\right )\alpha_j(x/\Omega)P(x,t),
\end{equation}
Here ${\mathbb E}^{-k_{j,i}}$ is a step or ladder operator such that for any function $g(x)$,
\begin{equation}
{\mathbb E}^{-k_{j,i}}g(x_1,\ldots,x_i,\ldots,x_d)=g(x_1,\ldots, x_i-k_{j,i},\ldots,x_d).
\end{equation}
One point to note is that when the number of molecules is sufficiently small, the characteristic form of a propensity function $\alpha(\x)$ in equation (\ref{ostoch}) has to be modified:
\[ \left (\frac{x_i}{\Omega}\right )^{k^{\rm in}_{j,i} }\rightarrow \frac{1}{\Omega^{k^{\rm in}_{j,i}}}\frac{x_i!}{(x_i-k^{\rm in}_{j,i})!}.\]

In this paper, it will be more convenient to consider an alternative representation of a chemical reaction network. If we let the random variable $X_i(t)$ denote the number of molecules in species $i$, then we can express $X_i(t)$ as
\begin{equation*}
X_i(t) = X_i(0) + \sum_{j}R_j(t)k_{j,i}
\end{equation*}
where $R_j(t)$ denotes the number of times reaction $j$ has occurred by time $t$. For example, if the propensity function $\alpha_j$ is constant for each $j$, $R_j(t)$ is a Poisson process with rate $\Omega \alpha_j$, so that letting $\Pi_j$ denote a unit rate Poisson process gives us
\begin{equation*}
X_i(t) = X_i(0) +  \sum_{j}\Pi_j(\Omega \alpha_j t)k_{j,i}.
\end{equation*}
More generally, if $\alpha_j$ depends on $X(t) = \{X_i(t)\}_{i=1,\ldots,d}$ we obtain
\begin{equation}\label{sde}
X_i(t) = X_i(0) + \sum_{j}\Pi_j\left(\Omega \int_0^t\alpha_j(X(s)/\Omega)ds\right)k_{j,i}.
\end{equation}
This representation is due to Kurtz and is known as the random time change representation of the process $X(t)$ \cite{ethier86}. Unlike
the chemical master equation (\ref{masterg}), which gives a differential equation for the density of $X_i(t)$, the representation (\ref{sde}) can be analyzed using probabilistic techniques to obtain meaningful information, even for extremely complicated reaction networks. Additionally, since it is a pathwise representation rather than a description of a density it is more useful as a basis for deriving
simulation algorithms. While one usually thinks of (\ref{masterg}) or (\ref{sde}) as representing a well-mixed chemical reaction, this framework is easily extended to spatial models of reaction diffusion systems where the diffusion is represented by additional reaction channels \cite{Chevallier2016}. Hence, although we will usually speak of non-spatial reaction networks in this paper, our results can be extended to spatial models.\par

There are two main methods for generating exact realizations of the paths (\ref{sde}). These are the Next Reaction method, and the Gillespie algorithm. We will present only the Gillespie algorithm in, which is given in Algorithm \ref{alg_sde}, since it is more easily related to other algorithms used in this paper. In particular it can be modified to simulate piecewise  deterministic Markov process in a fairly straightforward manner.  
 Let us denote
\begin{align*}
\sfC_{X} &:= \expect[\text{ \# of computations to simulate $X(T)$ using Algorithm \ref{alg_sde}}].
\end{align*}
More generally, we define the \emph{complexity} of an algorithm as the order of magnitude of the expected simulation time with respect to some parameter in the model. For our proposes this will be the system size $\Omega$. To get a handle on $\sfC_X$, note that  $\expect[t_{\rm next}] =  O(\Omega^{-1})$ implying we expect to simulate $T\expect[t_{\rm next}]^{-1} = O(\Omega)$ events, so that $\sfC_{X} = O(\Omega)$. It should be noted that the most computationally expensive step in each iteration of Algorithm \ref{alg_sde} is the computation of the reaction index $k$, and the next reaction method can be modified to reduce the cost of this step \cite{sanft2015}. On the other hand, it is the $O(\Omega^{-1})$ scaling of the complexity that is important for our analysis. 

%%%%%%%%
\begin{algorithm}
\caption{Simulation of (\ref{sde})}\label{alg_sde}
\begin{algorithmic}[1]
\State Initialize $X(0)$ and set $t = 0$.
\While{$t \le T$}
        \State $\alpha_{\rm total}:= \Omega\sum_{j}\alpha_j(X(s)/\Omega)$% \Comment{Compute the total reaction rate.}
        \State Generate random numbers $t_{\rm next} \sim \text{Exp}(\alpha_{\rm total})$ and $r \sim \text{Unif}(0,1)$
        \State $\bar{j} :=  \min_{i}\{i:\sum_{j=1}^i\alpha_j(X(s))<r\alpha_{\rm total}\}$ %\Comment{Use $r$ to select the next reaction.
        \State $X(t+t_{\rm next} ) := X(t)+k_{\bar{j}}$% \Comment{Update the species counts}
        \State $t := t+t_{\rm next} $% \Comment{Update the time}
\EndWhile
\end{algorithmic}
\end{algorithm}

\subsection{Monte Carlo simulations in the classical setting}%%%%%%%%%%%%%%%%%%%

Suppose we wish to approximate some statistics of some process $X_k(T)$ using a crude Monte Carlo estimator (MCE) $\widehat{Q}_{\rm crude}(M)$ with $M$ realizations of the process. That is,
\begin{equation}
\widehat{Q}_{\rm crude}(M) = \frac{1}{M}\sum_{j=1}^Mf(X_{k,[j]}(T)).\label{mce_crude}
\end{equation}
Here and elsewhere we use the convention that the subscript $[j]$ indicates a specific realization of a process. In order for $\widehat{Q}_{\rm crude}(M)$ to approximate $\expect[ f(X_k(T))]$ to order $\varepsilon$ in the sense of confidence intervals, that is, in the sense that the standard deviation of $\widehat{Q}_{\rm crude}(M)$ is order $\varepsilon$, we need
\begin{equation*}
\var(\widehat{Q}_{\rm crude}(M)) = O(\varepsilon^2)
\end{equation*}
and hence $M = O(\varepsilon^{-2}\var(f(X_{k,[j]}(T)))$. It is important to note that $\Omega$ and $\varepsilon$ must be related.  
 To see how such a relationship comes about, observe that in practice $\varepsilon$ will scale with ${\Omega}$ since we need to decrease $\varepsilon$ to resolve demographic noise as $\Omega$ increases. Following \cite{anderson2015b}, we will set
\begin{equation*}
\varepsilon = {\Omega}^{-\delta}
\end{equation*}
where $\delta \ge 0$ is a measure of the accuracy of the Monte Carlo estimator relative to the system's noise.
Then the complexity of the MCE $\widehat{Q}_{\rm crude}(M)$ is
\begin{equation*}
\sfC_{\rm crude}  = O(\varepsilon^{-2}\sfC_{X}\var(f(X_{k}(T))).
\end{equation*}
It can be shown that $\var(f(X_{k}(T)) = O(\Omega^{-1})$, which corresponds to the stochastic model approaching a deterministic limit at a rate inversely proportionally to the size of the system. 
On the other hand, $\sfC_{X} = O(\Omega)$, so that in the classical setting, the contribution of the complexity from the simulation of the path cancels with the variance of the path, and we obtain 
\begin{equation*}
\sfC_{\rm crude}  = O(\varepsilon^{-2}).
\end{equation*}
Essentially, when a stochastic model approaches a deterministic limit, variance reduction in terms of the system size is ``for free". This applies not only to the crude MCE, but any Monte Carlo method applied to a biochemical model in the classical setting. We refer to \cite{anderson2015b} for a rigorous analysis of Monte Carlo methods in the classical setting. The crucial observation that motivates the developments in this paper is that in the multiscale setting, where some species abundances do not scale with $\Omega$, it is generally not possible to bound $\var(f(X_{k}(T))$ in terms of $\Omega$, and the complexity of any Monte Carlo method increases by an order of magnitude.

\par

%%%%%%%%%%%%
\section{Multiscale reaction networks and the partial thermodynamic limit }\label{sec_ptdl}

We want to consider multiscale chemical reaction networks involving $l$ + $m$ species, where there are $l$ abundant species $\{\mcX_1,\mcX_2,\dots,\mcX_l\}$ and $m$ rare species $\{\mcY_1,\mcY_2,\dots, \mcY_m\}$. 
We will take $X_i$ and $Y_i$ to represent the number of species $\mcX_i$ and $\mcY_i$ respectively. The motivation behind this notation is that when $m=0$ we retrieve the classical setting both notationally and conceptually.  In some applications one refers to $X = (X_i)_{i = 1,\dots, l} \in \naturals^l$ as the population and $Y = (Y_i)_{i = 1,\dots, m} \in \naturals^m$
as the environment \cite{Levien16}, and we will often adopt this terminology. In order to justify the labelings of rare and abundant species we need two main assumptions. First, we assume that only the initial concentration of the abundant species scales with the systems size.

\begin{ass}[separation of scales]\label{ass_seperation}
Let $\Omega$ denote the system-size introduced in Section \ref{sec_basics}. Then 
\begin{equation*}
  ||X(0)||_1 = O({\Omega}),\quad ||Y(0)||_1 = O(1)
\end{equation*}
Here and throughout the rest of this paper the $O(\cdot)$ terms are with respect to $\Omega$. Note that for any vector $x \in \reals^l$, $||x||_1 = O(g(\Omega))$ is equivalent to saying $||x||_{\infty} = \max_{j}|x_j| = O(g(\Omega))$ so that this assumptions implies $|X_i(0)| = O(\Omega)$ for each $i = 1,\dots,l$. 
\end{ass}

\noindent Physically, ${\Omega}$ can be thought to control the volume $V(\Omega)$ of a well-mixed domain such that the density of each abundant species and the number of each rare species (environment) are independent of the volume. We will assume that $V(\Omega)$ is linearly increasing in $\Omega$.

 Following the notation of section 2,
each reaction in  $\mcR = \{\mcR_j,\, j=1,\dots,p\}$ has the form
\begin{align}
\mcR_j: \sum_{i=1}^{m} k_{j,i}^{\text{in}}\mcX_{i}+
\sum_{i=1}^{l} u_{j,i}^{\text{in}}\mcY_{i}
 {\to}
 \sum_{i=1}^{m} k_{j,i}^{\text{out}}\mcX_{i}+
\sum_{i=1}^{l} u_{j,i}^{\text{out}}\mcY_{i}
\label{reaction}
\end{align}
The reaction vectors are given by 
\begin{align*}
k_j &= ( k_{j,1}^{\text{out}} - k_{j,1}^{\text{in}},\dots, k_{j,l}^{\text{out}} - k_{j,l}^{\text{in}})^T \in \integers^l\\
u_j &= (u_{j,1}^{\text{out}} - u_{j,1}^{\text{in}},\dots, u_{j,m}^{\text{out}} - u_{j,m}^{\text{in}})^T \in \integers^m.
\end{align*}
Suppose that only a proper subset of the reactions involve changes in the state of the rare species and define
\begin{equation*}
\mcJ_1 =  \{j  : u_{j,i'} \ne 0, \mbox{ for some } i' \in  \{1,\dots,m\}\},\quad \mcJ_0=\{1,\ldots,p\}-\mcJ_1.
\end{equation*}
That is, $\{R_j: j \in \mcJ_1\}$  is the set of reactions that produce or annihilate at least one of the rare species. As a further simplification, we assume that this set of reactions does not produce changes in the number of any abundant species. The random time change representation of networks of this form can then be written in the form
\begin{subequations}
\label{sde_unscaled}
\begin{align}
X_i(t) &= X_i(0) + \sum_{j\in \mcJ_0}\Pi_j \left(\int_0^t\Omega  \alpha_j(X(s)/\Omega,Y(s))ds \right)k_{j,i},\quad i = 1,\dots, l\\
Y_{i'}(t) &= Y_{i'}(0) + \sum_{j\in \mcJ_1}\Pi_j \left(\int_0^t \alpha_j(X(s)/\Omega,Y(s))ds \right)u_{j,i'}, \quad i' = 1,\dots, m.
\end{align}
\end{subequations}
where $\Pi_j$ are once again independent unit rate Poisson processes.  Note that only the number of each abundant species is scaled by the system size $\Omega$. We will always assume that the rate functions $\alpha_j(x,y)$ are polynomials in $x$ and Lipschitz continuous $x$ uniformly in $y$, that is, there exists $L_{\alpha_j}(K)$ such that for all $||x_1-x_2||<K$, 
\begin{equation}
|\alpha_j(x_1,y) - \alpha_j(x_2,y)| \le L_{\alpha_j}(K)|x_1-x_2|\label{lip}
\end{equation}
for all $y \in \naturals^m$. This is consistent with the law of mass action for the abundant species, that is,
\begin{equation}\label{mass_action}
\alpha_j(x,y) = \zeta_j(y) \prod_{i=1}^mx_i^{k^{\rm in}_{j,i}}.
\end{equation}
We will assume $\zeta_j (y)= O(1)$ with respect to all other parameters, although with a more sophisticated scaling than the one carried out below our analysis can be extended to the case where $\zeta_j$ vary over many orders of magnitude \cite{anderson2012}. Scaled population models of this sort appear in a number of applications, such as gene networks where the promoters have a fixed number of states but the abundance of a gene scales with other properties (such as the size) of the cell
\cite{hufton2015}.  We would hope that the scaled rates along with suitable restrictions on the dynamics ensure the initial separation of scales is preserved over bounded time intervals. This usually involves some notion of stability, or regularity, for the stochastic equations (\ref{sde_unscaled}) and rigorous results concerning the stochastic stability of continuous time Markov chains can established for general biochemical reaction models \cite{Chevallier2016}. The focus of this paper is not stochastic stability, but rather variance reduction in Monte Carlo simulations, so to keep our presentation self contained we will make the following assumption.

\begin{ass}[stability]\label{ass:stable}
 Let $(X(t),Y(t))$ satisfy equation (\ref{sde_unscaled}).
 Then $||Y(t)||_1 \le c$ for all $t<T$ and for all $\sigma\ge1$, %there exists a positive constant $C_{\sigma}(T)$ such that 
 \begin{equation}
 \expect[||X(t)||_1^{\sigma}] <  ||X(0)||_1^{\sigma}B_{\sigma}(T) \label{momentbound}%C_{\sigma}(T)K^{1-\sigma}||X(0)||_1.
 \end{equation}
 where $B_{\sigma}(t)$ is independent of $||X(0)||_1$. 
Moreover, if we define the stopping time
  \begin{equation}
  \tau_K = \inf_{t \ge 0}\{t:||X(t)||_1 \ge K||X(0)||_1\}, \label{tau_K}
  \end{equation}
  then for all $T >0$ there exists a positive constant $C(T)$ independent of $K$ and the initial conditions such that \begin{equation}
\prob(\tau_K < T) \le C(T)K^{-1}\label{exitbound}.
\end{equation}
\end{ass}

Note that the first bound simply says that the growth rates of the moments are not dependent on the system size. Now let us briefly explain the intuition behind the last statement in Assumption \ref{ass:stable}, which is less transparent.  We begin by noting that the most desirable condition would be the existence of $K$ such that $||X(t)||_1\le K||X(0)||_1$ for all $t\le 0$. This ensures $||X(0)||_1 = O(\Omega)$ and $||Y(0)||_1 = O(1)$, which is exactly what we would like to achieve. For a deterministic system this a sensible condition, but it is far too strong a restriction on the stochastic process $X(t)$. In fact, we can only guarantee sample paths of (\ref{sde_unscaled}) will be bounded with probability one when there is a conservation law in effect ($||X(t)||_1$  and $||Y(t)||_1$ are constant). This suggests we must settle for a more realistic assumption, one that quantifies the event of a sample path breaking these inequalities and ensures $K$ can be chosen to make these events rare. We achieve this in Assumption \ref{ass:stable} by bounding the probability that $\tau_K>T$. If the corresponding mean field process derived below has a trapping region we would naturally expect Assumption \ref{ass:stable} to hold since the deterministic dynamics arising in the limit $\Omega \to \infty$ are relatively bounded. More generally, assumption \ref{ass:stable} holds for any model with only linear growth, see Example \ref{ex:gene}.

 Under these assumptions, we can introduce the scaled population variables, or densities, $X^{\Omega}(t) = {\Omega}^{-1}X(t)$, ensuring that $||X^{\Omega}(t)||_1 = O(1)$ in some bounded interval. 
The scaling of the rates leads to the scaled dynamics
\begin{subequations}
\label{sde_scaled}
\begin{align}
X_i^{\Omega}(t) &= X_i^{\Omega}(0) + \sum_{j\in \mcJ_0} \Pi_j \left({\Omega}\int_0^t\alpha_j(X^{\Omega}(s),Y(s))ds \right){\Omega^{-1}}k_{j,i},\quad i = 1,\dots, l\\
Y_{i'}(t) &= Y_{i'}(0) + \sum_{j\in \mcJ_1} \Pi_j \left(  \int_0^t\alpha_j(X^{\Omega}(s),Y(s))ds \right)u_{j,i'}, \quad i' = 1,\dots, m.
\end{align}
\end{subequations}
In light of this scaling, it makes sense to take the thermodynamic limit ${\Omega} \to \infty$ only in the abundant species, following \cite{jahnke2012} we call this the \emph{partial thermodynamic limit} (PTDL). Setting $\lim_{\Omega \to \infty} X^{\Omega}(t) = Z(t)$ in some sense we would expect that $Z_i(t)$ satisfies (\ref{sde_scaled}a) with the stochastic integral for the abundant species by mean field dynamics. That is,
\begin{subequations}\label{pdmp}
\begin{align}
Z_i(t) &= Z_i(0) + \sum_{j\in \mcJ_0}\int_0^t\alpha_j(Z(s),Y(s))k_{j,i,},\quad i = 1,\dots, l\\
Y_{i'}(t) &= Y_{i'}(0) + \sum_{j\in \mcJ_1}\Pi_j \left(\int_0^t \alpha_j(Z(s),Y(s))ds \right)u_{j,i'}, \quad i' = 1,\dots, m.
\end{align}
\end{subequations}
Another way to express the dynamics of $Z_i(t)$  is through the differential equation
\begin{equation}\label{pdmp_de}
Z_i'(t) = \sum_{j\in {\mcJ_0}} \alpha_j(Z(s),Y(s))k_{j,i}, \quad i = 1,\dots, l
\end{equation}
which holds between discrete jumps. $Z(t)$ is an example of a piecewise deterministic Markov process (PDMP). A comprehensive review of such process along with more rigorous constructions can be found in \cite{Malrieu15}. While there are many technical statements one can make about various notions of the convergence $X^{\Omega}(t) \to Z(t)$, the technical results necessary for our analysis are contained in Sections \ref{sec_coupling} and  \ref{sec_couplingfb}. \par 
The PDMP (\ref{pdmp_de}) should be compared to the piecewise SDE one would obtain if instead of taking the fully  deterministic mean field approximation of $X^{\Omega}(t)$, one carried out a system-size expansion \cite{gardiner1985,bressloff2014book} to leading order. This piecewise It\^{o} SDE is given by
\begin{equation}
dZ_i(t) = \sum_{j\in \mcJ_0}\left(\alpha_j (Z(s),Y(s))dt
+\Omega^{-1/2}\sqrt{\alpha_j (Z(s),Y(s))}dW_j(t)\right)k_{ji}
\end{equation}
where $W_j(t)$ are independent Brownian motions. Previous studies have developed
hybrid frameworks for simulating multi-scale chemical reaction networks using this
system-size expansion \cite{duncan2016,Ganguly2014}. However, in contrast to the Monte Carlo methods discussed
in the following sections, these hybrid methods have an error which is fixed
in terms of $\Omega$.

\subsection{Generating sample paths of the PDMP}
Unlike the jump process (\ref{sde_scaled}), exact realizations of $Z(t)$ cannot be generated due to the continuous nature of (\ref{pdmp_de}). Instead one must pick an accuracy $h$ and compute an approximation $(Z^h(t),Y^h(t))$
using standard ODE methods between jumps, and then incorporate the stochastic jumps in some suitable way. Note that one only needs to include the stochastic contribution of reaction channels indexed by the set $\mcJ_1$. If the model has no feedback than one can compute $Y^h(T) = Y(T)$ using an exact algorithm, and then solve for $Z^h(T)$ using a suitable ODE method between jumps.  For models with feedback, the computation of $Z^h(T)$ is more subtle, since one needs to compute the jumps times while evolving the process $Z^h(T)$. A number of algorithms exist for accomplishing this task. Popular among those studying PDMP approximations of chemical reaction networks are splitting methods, which can be derived by applying operator splitting to the forward Kolmogorov equation for the PDMP \cite{Chevallier2016}. Like $\tau$-leaping, splitting methods obtain approximate jumps times by holding the jump rates constant in $Z(t)$ and evolving $Y(t)$ over some small time interval.  We have found a more direct approach, where the next jump time is computed exactly, to be more effective for the models considered in this paper. We leave a more systematic comparison of methods for solving the PDMP arising in thermodynamic limit to future studies.\par  
We now describe direct method. Given the state of the system, $(Z(t_0),Y(t_0)) = (z_0,y_0)$, at time $t_0$, it can be shown that the next jump time is given by 
\begin{equation}\label{truejump}
t_1 = \inf\left\{u>t_0 :\int_{t_0}^u \sum_{j \in \mcJ_2}\alpha_j(\Phi^s_{y_0}(z_0),y_0)ds =\bar{s}  \right\}
\end{equation}
where $\bar{s} \sim \text{Exp}(1)$ and $\Phi^t_{y}$ is the continuous flow of $Z(t)$ when $Y(t) = y$  \cite{Riedler2013}. In general, the process of solving for $(Z^h(t),Y^h(t))$ at the jump times by numerical solving (\ref{truejump}) is known as the True Jump Method. One can show that the solution to the minimization problem (\ref{truejump}) along with the state of the continuous variable at $t_1$ is given by $(Z(t_1^-),t_1) = (z(\bar{s}),\tau(\bar{s}))$ where 
\begin{align}\label{chv_ode}
\left\{\begin{array}{l}
z'(s) = \sum_{j \in \mcJ_1}\alpha_j(z(s),y_0)\left(\sum_{j \in \mcJ_2}\alpha_j(z(s),y_0)\right)^{-1}\\
\tau'(s) = \left(\sum_{j \in \mcJ_1}\alpha_j(z(s),y_0)\right)^{-1}\\
z(0) = z_0,\quad \tau(0) = t_0.
\end{array}\right.
\end{align}
The process of solving the PDMP using by (\ref{truejump}) to obtain the solutions to (\ref{chv_ode}) has been coined the CHV method \cite{veltz2015}. One subtly is that a direct implementation of the CHV method gives only the values of $Z(t)$ at the jump times, whereas we will be interested in obtaining the value of the process at some specified time $T$. However, this can easily be achieved by integrating the continuous component of the process from the last jump time before $T$, up to $T$. 
% A simple way to obtain the jump times is to augment the stochastic dynamics of the PDMP by a Poisson process with propensity $\alpha_{\rm sample}$. In this way we obtain a sample of the PDMP every time the Poisson processes jumps. The modified initial value problem for the next event is 
%\begin{align}\label{chv_ode_sampled}
%\left\{\begin{array}{l}
%z'(s) = \sum_{j \in \mcJ_1}\alpha_j(z(s),y_0)\left(\alpha_{\rm sample}+\sum_{j \in \mcJ_2}\alpha_j(z(s),y_0)\right)^{-1}\\
%\tau'(s) = \left(\alpha_{\rm sample}+\sum_{j \in \mcJ_2}\alpha_j(z(s),y_0)\right)^{-1}\\
%z(0) = z_0,\quad \tau(0) = t_0.
%\end{array}\right.
%\end{align}
The details of the CHV method are given in Algorithm \ref{alg_pdmp}. At leading order in $h$, the complexity of obtaining $Z^h(T)$ with this algorithm is simply the complexity using whatever integration scheme we use to solve (\ref{chv_ode}) up to time $T$. In particular, using high order methods renders the cost of computing $Z^h(T)$ negligible relative to the cost of computing $X(T)$ exactly when $\Omega$ is large. 

%An algorithm for producing an approximate path $(Z^h(t),Y^h(t))_{t \ge 0}$ can be constructed by applying a modified version of algorithm \ref{alg_sde} to the reaction channels $\mcJ_1$ and updating $Z^h(t)$ through an appropriately chosen numerical integration scheme.  A more detailed description of this procedure is given in algorithm \ref{alg_pdmp}.   In algorithm \ref{alg_pdmp} $h$ is the integration accuracy while $h_0$ is the time interval over which we hold the stochastic rates constant. One can think of $h_0$ as playing the role of $\tau$ in the $\tau$-leaping approximation of the stochastic variable. An important point to note is that one can generally select $h_0$ to be much larger than $h$ and still obtain an accuracy of $O(h)$. This is because the reaction rates of the slow species do not scale with the system-size.  In particular, $h_0$ does not need to scale with the system-size. As we will see, the situation is greatly simplified, both from the perspective of analysis and simulation, when the evolution of the rare species $Y_i(t)$ does not involve any of the abundant species (no feedback).
%That is, the set $\mcJ_1$ can be written as
%\begin{equation*}
%\mcJ_1 = \{j  : k_{j,i} = 0,\quad \forall \, i = 1,\dots,l\},
%\end{equation*}
%and $\alpha_j$ in equation (\ref{sde_unscaled}b) is independent of $X^{\Omega}$ for all $j\in \mcJ_1$.
%In the absence of feedback, $X^{\Omega}(t)$ is a scaled population model with inhomogeneous rates,
% after being conditioned on a realization of $Y(t)$.
%
%
%%%%%%%%%
\begin{algorithm}
\caption{Simulation of (\ref{pdmp})}\label{alg_pdmp}
\begin{algorithmic}[1]
\State Select an accuracy $h$. Initialize $Z^h(0)$ and set $t = 0$.
\While{$t \le T$}
        \State Generate a random number $\bar{s} \sim \text{Exp}(1)$.
        \State Let $(z(\bar{s}),\tau(\bar{s}))$ be the solution to (\ref{chv_ode}) with $t_0 = t$ and $z_0 = Z^h(t)$. 
	\State $Z^h(t+\tau(\bar{s})) = z(\bar{s})$
	\State $\alpha_{\rm total} = \sum_{j \in \mcJ_1}\alpha_j(Z^h(\tau(\bar{s})),Y(t))$
        	  \State Generate a random number $r \sim \text{Unif}(0,1)$
	 % \If {$r > \alpha_{\rm sample}\alpha_{\rm sample}^{-1}$}
       	 	\State $\bar{j}:=  \min_{i}\{i:\sum_{j \in \mcJ_2:j<i}\alpha_j(Z^h(s),Y^h(s))<r\alpha_{\rm total}\}$
        		\State $Y^h(\tau(\bar{s})) := Y^h(t)+u_{\bar{j}}$
       % \Else
       % \State $Y^h(\tau(\bar{s})):= Y^h(t)$
       % \EndIf
        	\State $t := \tau(\bar{s})$
\EndWhile
\State Perform numerical integration to obtain $Z^h(T)$. 
\end{algorithmic}
\end{algorithm}

\subsection{Example: Two State Catalyst }\label{ex_catalyst}%
Our first example of a simple chemical reaction network without feedback consists of four species  $\mcX_1,\mcX_2,\mcY_1$ and $\mcY_2$ obeying the reaction scheme
\begin{align}
\begin{split}
\mcX_1+\mcY_1&\underset{\alpha}{\to} \mcX_2+\mcY_1\\
\mcX_2 +\mcY_2&\underset{\beta}{\to} \mcX_1+\mcY_2\\
\mcY_1&\underset{\lambda}{\to} \mcY_2\\
\mcY_2&\underset{\mu}{\to} \mcY_1
\end{split}\label{example1}
\end{align}
where $X_1(0)+X_2(0)= {\Omega} \gg 1$ and $Y_1+Y_2 = 1$. This reaction network can model a catalytic reaction \cite{Chevallier2016}, or a population of particles jumping between two lattice sites with gate controlled by the $Y$ species \cite{Levien16}. In the former context $Y_i$ would be the catalysts, while in the latter they would represent the state of the barrier between two compartments and $X_i$ would represent the number of particles in each compartment. \par

In terms of Kurtz's time change representation,
\begin{align*}
X(t) &= X(0) -\Pi_1\left( \int_0^tX(s)Y(s) \alpha ds \right) + \Pi_2\left( \int_0^t(\Omega-X(s))(c-Y(s))   \beta ds \right) \\
Y(t) &= Y(0) - \Pi_3\left(\int_0^t \lambda (c-Y(s))ds \right)+ \Pi_4\left(\int_0^t \mu Y(s)ds \right)
\end{align*}
where we have used the conservation laws $X_2(t) = {\Omega} -X_1(t)$, $Y_2(t) = c -Y_1(t)$ and set $X_1(t) = X(t)$, $Y_1(t) = Y(t)$. The scaled model is
\begin{align*}
X^{\Omega}(t) &= X^{\Omega}(0) -\Pi_1\left(\Omega \int_0^tX^{\Omega}(s)Y(s) \alpha ds \right) \Omega^{-1}\\ &+ \Pi_2\left( \Omega\int_0^t(1-X^{\Omega}(s))(c-Y(s)) \beta ds \right) \Omega^{-1}\\
Y(t) &= Y(0) - \Pi_3\left(\int_0^t \lambda (c-Y(s))ds \right)+ \Pi_4\left(\int_0^t \mu Y(s)ds \right)
\end{align*}
In the limit $ {\Omega} \to \infty$ we obtain the PDMP
\begin{align*}
Z(t) &= Z(0) - \int_0^tZ(s)Y(s) \alpha ds +  \int_0^t(\Omega-Z(s))(c-Y(s))   \beta ds  \\
Y(t) &= Y(0) - \Pi_3\left(\int_0^t \lambda (c-Y(s))ds \right)+ \Pi_4\left(\int_0^t \mu Y(s)ds \right)
\end{align*}
Note that assumption \ref{ass:stable} holds due to the conservation laws.

\subsection{Example: Switching Gene}\label{ex:gene}%

A second example of a network without feedback is a simplified model of protein production in the presence of a switching gene \cite{hufton2015,Newby2012}:
\begin{align*}
\mcY_1&\underset{{\Omega}\alpha}{\to} \mcY_1+\mcX\\
\mcX&\underset{\beta}{\to}  \emptyset\\
\mcY_1&\underset{\lambda}{\to} \mcY_2\\
\mcY_2&\underset{\mu}{\to} \mcY_1
\end{align*}
 In this reaction network $\mcX$ represents a protein and $\mcY_i$ the state associated gene. When the gene is in state $\mcY_1$, the protein is produced at a rate proportional to the system size $\Omega$, while when in state $\mcY_2 $ the protein is not produced. For example, $\mcY_1$ could represent the presence of a repressor occupying the RNA polymerase binding site, thereby preventing the transcription of the gene \cite{Karmakar04,Thomas2014}. In this simplified model the complex mechanisms of transcription and translations are viewed as a ``back box" represented by the production rate $\alpha$. 
The stochastic equations are
\begin{align*}
X(t) &= X(0) +\Pi_1\left( \int_0^tY(s){\Omega} \alpha ds \right) - \Pi_2\left( \int_0^tX(s)   \beta ds \right) \\
Y(t) &= Y(0) - \Pi_3\left(\int_0^t\lambda(1- Y(s))ds \right)+ \Pi_4\left(\int_0^t \mu Y(s)ds \right)
\end{align*}
where we have once again used conservation to obtain a two dimensional system.
In the partial thermodynamic limit we obtain
\begin{align*}
Z(t) &= Z(0) + \int_0^tY(s) \alpha ds - \int_0^tZ(s)   \beta ds  \\
Y(t) &= Y(0) - \Pi_3\left(\int_0^t \lambda(1- Y(s))ds \right)+ \Pi_4\left(\int_0^t \mu Y(s)ds \right)\\
\end{align*}
In order to establish that Assumption \ref{ass:stable} one only needs to note that this assumption holds for a Poisson process which dominates $X(t)$. 

\subsection{Example: Switching Gene with Feedback}\label{ex:gene_fb}%
A more complex model of gene expression might include feedback between the protein associated with the gene and environment controlling the gene expression \cite{hufton2015}. For example, suppose protein $\mcX$ activates the repressor blocking the RNA polymerase. This leads to the reaction network
\begin{align*}
\mcY_1&\underset{{\Omega}\alpha}{\to} \mcY_1+\mcX\\
\mcX&\underset{\beta}{\to}  \emptyset\\
\mcX + \mcY_1&\underset{\Omega^{-1}\lambda }{\to} \mcX+ \mcY_2 \\
\mcY_2&\underset{\mu}{\to} \mcY_1
\end{align*}
The stochastic equations are
\begin{align*}
X(t) &= X(0) +\Pi_1\left( \int_0^tY_1(s) {\Omega}\alpha ds \right) - \Pi_2\left( \int_0^tX(s)   \beta ds \right) \\
Y(t) &= Y(0) - \Pi_3\left(\int_0^t \Omega^{-1}\lambda X(s) (1- Y(s))ds \right)+ \Pi_4\left(\int_0^t \mu Y(s)ds \right)
\end{align*}
The partial thermodynamic limit is 
\begin{align*}
Z(t) &= Z(0) + \int_0^tY(s) \alpha ds - \int_0^tZ(s)   \beta ds  \\
Y(t) &= Y(0) - \Pi_3\left(\int_0^t Z(t)\lambda(1- Y(s))ds \right)+ \Pi_4\left(\int_0^t \mu Y(s)ds \right).
\end{align*}
Since the $\mcJ_1$ channels have rates that depend on the abundant variables, we have a network with feedback. An argument similar to the one made for example \ref{ex:gene} implies that assumption \ref{ass:stable} holds.

%%%%%%%%%%%%%%%%%%%%%%%%%%%%%%%%%%%%%%%%%%%%%%%%%%%%%%%%%%%%%%%%
\section{Variance reduction in the Multiscaling setting}\label{sec_mce}

As noted at the end of Section \ref{sec_basics}, for multiscale models we can generally not bound the sample variance in terms of the system size and the asymptotic complexity of Monte Carlo methods picks up a factor of $\Omega$. For example, 
\begin{equation*}
C_{\rm crude} = O(\varepsilon^{-2-1/\delta}). 
\end{equation*}
Loosely speaking, while a great deal of information about the exact model is still contained in the thermodynamic limit $Z(t)$, it is not being used in the computations. Again, we emphasis how this contrasts the classical setting where information about the deterministic limit is used to accelerate the converge of an MCE without any additional work. Generally speaking, our goal is to understand how information about the thermodynamic limit can be used in the multiscale setting. \par

The coupled Monte Carlo estimator we will introduce is based on the idea of variance reduction via a probabilistic coupling of the exact process with an approximate process. In our case the approximate process will be the PDMP $Z(t)$. This idea has proven to be very useful in, and is the basis for, multilevel Monte Carlo methods \cite{anderson2012,giles2015} where different $\tau$-leaping approximations are coupled. To construct an MCE in the present setting we note that
\begin{equation}
\expect [f(X_i(T))] = \expect[ f(X_i(T))- f(Z_i(T))] + \expect[f(Z_i(T))]\label{cmc_exact}
\end{equation}
Two observations allow us to use this decomposition to obtain statistics of $\expect [f(X_i(T))]$ more efficiently than the crude MCE (\ref{mce_crude}). First, statistics of $Z(T)$ can be obtained much more efficiently than statistics of the exact process when $\Omega$ is even moderately large relative to the abundance of the rare species.  These statistics can be obtained either by a MCE using Algorithm \ref{alg_pdmp} to generate the sample paths, or by non-Monte Carlo based methods given in Appendix \ref{sec_nonmce}, which are difficult to apply to the exact process.  Second, we can reduce the number of simulations we need to perform of the full process by coupling the processes $X(t)$ and $Z(t)$ in a way that reduces the variance of  the difference $f(X_i(T))- f(Z_i(T))$. \par

For the second term in (\ref{cmc_exact}) let us assume we can produce an approximation $\widehat{Q}_{Z}(h) \approx  \expect[f(Z_i(T))]$ satisfying 
\begin{equation*}
|\widehat{Q}_{Z}(h)- \expect[f(Z_i(t))]| = O(h)
\end{equation*}
and  $\var(\widehat{Q}_{Z}(h)) = O(h)$ with $h < \varepsilon$. 
We will also need an approximate path to estimate the term $f(X_i(T))- f(Z_i(T))$.  This will be denoted by $Z_i^h(T)$ and taken to satisfy
\begin{equation*}
\expect[|Z_i^h(T) - Z_i(T)|^2] =O(h^2).
\end{equation*}
%To justify assuming a $O(h^2)$ error bound on the second moment we note that a consistent numerical integration scheme will give $\expect[|%Z_i^h(T) - Z_i(T)|] = O(h)$. 
Note that $h$ will generally depend on $\varepsilon$, and hence $\Omega$, so the use of the $O(\cdot)$ notation in this context is consistent with our earlier statement that asymptotic results are with respect to the system size. 
Such a bound will be satisfied provided the numerical integration step in algorithm \ref{alg_pdmp} is convergent. 
Then an $O(\varepsilon)$ estimator $\widehat{Q}_{\rm coupled}(M_1,h)$ of the form
\begin{equation}
\expect f(X_i(T)) = \expect[ f(X_i(T))- f(Z_i^h(T))] + \expect[f(Z_i^h(T))]\label{cmcApprox}
\end{equation}
can be constructed by summing the estimator
\begin{equation*}
\widehat{Q}_{(X,Z^h)}(M_1) = \frac{1}{M_1}\sum_{j=1}^{M_1}(f(X_{i,[j]}(T))-f(Z_{i,[j]}^h(T)))
\end{equation*}
and the approximation $\widehat{Q}_{Z}(h)$:
\begin{equation*}
\widehat{Q}_{\rm coupled}(M_1,h) = \widehat{Q}_{(X,Z^h)}(M_1)  + \widehat{Q}_{Z}(h).
\end{equation*}
Setting
\begin{equation}
V_{(X,Z_h)}(T) = \var( f(X_i(T))-f(Z_i^h(T))),\label{var}
\end{equation}
the variance of the coupled MC estimator, $V_{\text{coupled}}$, is simply
\begin{align}
V_{\text{coupled}} = M_1^{-1}V_{(X,Z_h)} + \var(\widehat{Q}_{Z}(h)) \sim M_1^{-1}V_{(X,Z^h)}^{\Omega} \label{var_coupled}
\end{align}
where we are assuming the first term is leading order in $\Omega$, meaning that the limiting factor in reducing the variance in the simulation of the coupled path. The methods of this paper are most applicable when $\widehat{Q}_{Z}(h)$ is computed with non-Monte Carlo based methods, and hence $\var(\widehat{Q}_{Z}(h)) = 0$. However, we do not exclude the possibility that $\widehat{Q}_{Z}(h)$ is computed from Monte Carlo simulations of the PDMP. In such cases we would assume that simulations of $Z^h(t)$ are much cheaper than simulations of $X(t)$ and hence $\var(\widehat{Q}_{Z}(h))$ can be made $o(\epsilon)$ at a negligible cost, for example, this will be the case when the search step in Algorithm \ref{alg_sde} is very expensive.
Of course for an order $\varepsilon$ estimator, we require $V_{\text{coupled}} = O(\varepsilon^{2})$ so that
\begin{equation*}
M_1 = \varepsilon^{-2}V_{(X,Z_h)}. 
\end{equation*}
It is now clear that for $\widehat{Q}_{\rm coupled}$ to be preferable over $\widehat{Q}_{\rm crude}$ we must be able to make $V_{(X,Z_h)}$ small and $\widehat{Q}_{Z}(h)$ must be cheap to generate. To be more precise,  
set
\begin{align*}
 \sfC_{(X,Z^h)} &\approx \expect[ \text{ \# of computations to simulate the coupled process $(X,Z^h)$  }]\\
\sfC_{Q_{Z,h}} & \approx \expect[\text{ \# of computations to compute $\widehat{Q}_{Z}(h)$ }]
\end{align*}
then the computational cost of the coupled MC estimator is
\begin{equation*}
\sfC_{\text{coupled}} = M_1\sfC_{(X,Z^h)} + \sfC_{\widehat{Q}_{Z}(h)}.
\end{equation*}
Generally we would expect $\sfC_{(X,Z^h)}$ to be comparable to $\sfC_{X}$. One the other hand, while $\sfC_{\widehat{Q}_{Z}(h)}$ depends heavily on the method we use to generate $\widehat{Q}_{Z}(h)$, it will generally be true that $\sfC_{\widehat{Q}_{Z}(h)} \ll \sfC_{X}$ because generating statistics of $Z^h(t)$ does not require implementing the large number of stochastic events needed for $X(t)$.  Moreover, as we have already mentioned it is often possible to bypass Monte Carlo methods entirely to obtain these statistics since the process $(Z(t),Y(t))$ my have a very low dimensional stochastic component.  When Monte Carlo simulations are bypassed we only need to compute $\widehat{Q}_{Z}(h)$, even if we are interested in doing computational experiments over a range of values of $\Omega$.

%%%%%%%%%%%%%%%%%%%%%%%%%%%%%%%%%%%%%%%%%%%%%%%%%%%%%%%%%%%%%%%%
\section{Models without feedback}\label{sec_coupling}

For networks without feedback, we can simply couple the full stochastic model to the PDMP by generating a realization of
$Y(t)$ and using it to drive both the full model and the PDMP.  This simplifies the complexity analysis and also provides an upper bound on the achievable complexity of the coupled MCE for more general systems. Explicitly, the coupling is
\begin{subequations}\label{coupling_nofb}
\begin{align}
Y(t) &= Y(0) + \sum_{j \in \mcJ_1}\Pi_j \left(\int_0^t\alpha_j(Y(s))ds \right)u_j \\
X(t) &= X(0) + \sum_{j \in \mcJ_0}\Pi_j \left(\Omega \int_0^t\alpha_j(X(s)/\Omega,Y(s))ds \right)k_j\\
Z(t) &= Z(0) + \sum_{j \in \mcJ_0} \int_0^t\alpha_j(Z(s),Y(s))ds k_j
\end{align}
\end{subequations}
It is not difficult to see that (\ref{coupling_nofb}) is a probabilistic coupling in the sense that the marginal distributions of $(X,Z)$ are indeed the distributions for the uncoupled processes $X(t)$ and $Z(t)$.  In the present setting algorithm \ref{alg_pdmp} can be replaced by the simpler algorithm \ref{alg_nofb} which reuses the process $Y(t)$ in an more efficient manner. Algorithm \ref{alg_nofb} is also easier to implement with existing software since once can use preexisting packages to perform algorithm \ref{alg_sde} and the numerical integration. 

%%%%%%%%
\begin{algorithm}
\caption{Simulation of (\ref{coupling_nofb})}\label{alg_nofb}
\begin{algorithmic}[1]
\State Generate $\{Y(t)\}_{0 \le t \le T}$ using algorithm \ref{alg_sde}.\label{alg_nofb_step1}
\State Generate $\{X(t)\}_{0 \le t \le T}$ using algorithm \ref{alg_sde} with rates computed from $\{Y(t)\}_{0\le t \le T}$.\label{alg_nofb_step2}
\State Generate $\{Z^h(t)\}_{0 \le t \le T}$ using deterministic numerical integration with rates computed from $\{Y(t)\}_{0\le t \le T}$.\label{alg_nofb_step3}
\end{algorithmic}
\end{algorithm}

The following Theorem, which is proved in Appendix \ref{ap:proof1}, gives a rigorous bound on the second moment of the pathwise error, which can be used to estimate the variance. 

\begin{thm}\label{main_thm}

Suppose Assumptions \ref{ass_seperation} and \ref{ass:stable} hold in a network without feedback and let $(Y(t),X^{\Omega}(t),Z(t))$ be given by (\ref{coupling_nofb}). Then
% there exists a constant $\bar{C}(T)$ independent of $\sigma$, $K$ and $\Omega$ such that 
\begin{equation}\label{thm:pathbound}
\expect[||X^{\Omega}(T) - Z(T)||_1^2\bfone{\{\tau_K>T\}}]= O(\Omega^{-1}).
\end{equation}
\end{thm}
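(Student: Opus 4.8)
The plan is to run the standard density-dependent-chain estimate of Kurtz, localized by the stopping time $\tau_K$ so that all constants stay independent of $\Omega$. First I would compensate the Poisson processes driving the abundant species: writing $\Pi_j(\lambda) = \lambda + \tilde{\Pi}_j(\lambda)$ with $\tilde{\Pi}_j$ the associated mean-zero martingale, the scaled equation (\ref{sde_scaled}a) gives, componentwise,
\begin{equation*}
X_i^{\Omega}(t) - Z_i(t) = \big(X_i^{\Omega}(0) - Z_i(0)\big) + \sum_{j\in\mcJ_0}\int_0^t\big[\alpha_j(X^{\Omega}(s),Y(s)) - \alpha_j(Z(s),Y(s))\big]k_{j,i}\,ds + M_i(t),
\end{equation*}
where $M_i(t) = \Omega^{-1}\sum_{j\in\mcJ_0}\tilde{\Pi}_j\!\big(\Omega\int_0^t\alpha_j(X^{\Omega}(s),Y(s))\,ds\big)k_{j,i}$ is a martingale and, crucially, both $X^{\Omega}$ and $Z$ are driven by the \emph{same} realization of $Y$ --- this is exactly the feature that fails when there is feedback. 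I would then run all three terms with the processes stopped at $\tau_K$ and note that on $\{\tau_K > T\}$ the stopped and unstopped processes agree up to time $T$, so it suffices to bound $e(t) := \expect[||X^{\Omega}(t\wedge\tau_K) - Z(t\wedge\tau_K)||_1^2]$ and restrict to that event at the end.

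The localization does two jobs. On $\{s < \tau_K\}$ we have $||X^{\Omega}(s)||_1 < K||X^{\Omega}(0)||_1$, and since Assumption \ref{ass_seperation} gives $||X(0)||_1 = O(\Omega)$, hence $||X^{\Omega}(0)||_1 = O(1)$, this confines $X^{\Omega}(s)$ to a ball whose radius does not depend on $\Omega$; a one-line Gronwall bound on the flow (\ref{pdmp_de}), using $||Y(t)||_1 \le c$ from Assumption \ref{ass:stable}, confines $Z$ to a fixed ball on $[0,T]$ as well. Therefore the local Lipschitz hypothesis (\ref{lip}) supplies a constant $L = L(K,T)$, independent of $\Omega$, with $|\alpha_j(X^{\Omega}(s),Y(s)) - \alpha_j(Z(s),Y(s))| \le L\,||X^{\Omega}(s)-Z(s)||_1$ for $s < \tau_K$, and the polynomial rates satisfy $\alpha_j(X^{\Omega}(s),Y(s)) \le C_\alpha(K,T)$ there. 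For the martingale term I would use that $\tilde{\Pi}_j(t)^2 - t$ is a martingale and that the time change $\Omega\int_0^t\alpha_j\,ds$ is continuous, adapted and nondecreasing; optional stopping then gives $\expect[\tilde{\Pi}_j(\Omega\int_0^{t\wedge\tau_K}\alpha_j\,ds)^2] = \Omega\,\expect[\int_0^{t\wedge\tau_K}\alpha_j\,ds] \le \Omega\, C_\alpha T$, and combining this over $j\in\mcJ_0$ via Cauchy--Schwarz on the $1$-norm of the sum yields $\sup_{t\le T}\expect[||M(t\wedge\tau_K)||_1^2] = O(\Omega^{-2}\cdot\Omega) = O(\Omega^{-1})$. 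This is precisely the $\Omega^{-1}$ in the statement: it is the residual demographic noise of the abundant species.

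I would close with Gronwall. Using $(a+b+c)^2 \le 3(a^2+b^2+c^2)$, Cauchy--Schwarz to bound $\big(\int_0^{t}||X^{\Omega}(s\wedge\tau_K)-Z(s\wedge\tau_K)||_1\,ds\big)^2$ by $T\int_0^t||X^{\Omega}(s\wedge\tau_K)-Z(s\wedge\tau_K)||_1^2\,ds$, and the estimates above together with $\expect[||X^{\Omega}(0)-Z(0)||_1^2] = O(\Omega^{-2})$ (zero if $X^{\Omega}(0) = Z(0)$), one obtains $e(t) \le C_0\Omega^{-1} + C_1\int_0^t e(s)\,ds$ with $C_0, C_1$ independent of $\Omega$, so Gronwall gives $e(T) \le C_0 e^{C_1 T}\Omega^{-1}$; restricting to $\{\tau_K>T\}$ yields (\ref{thm:pathbound}). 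The main obstacle I anticipate is bookkeeping: making sure every constant is genuinely $\Omega$-free, which rests on the scaling $||X^{\Omega}(0)||_1 = O(1)$ that makes the localization region $\Omega$-independent, and on reading (\ref{lip}) jointly with the a priori sup-bounds (polynomial rates are only \emph{locally} Lipschitz, so one needs control of the magnitudes as well as of the increment). Two secondary points, routine but worth spelling out, are the It\^{o}-isometry-type identity for a compensated Poisson process under a random adapted time change (cleanest via the martingale problem for (\ref{sde_scaled})) and the fact that the boundedness of $Z$ is not literally contained in Assumption \ref{ass:stable} and needs its own short argument from $||Y||_1\le c$.
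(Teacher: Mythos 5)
Your proposal is correct and follows essentially the same route as the paper's proof: compensate the driving Poisson processes, use the Lipschitz bound on the rates (valid on the $\Omega$-independent ball enforced by $\tau_K$ and Assumption \ref{ass_seperation}), control the second moment of the compensated term via optional stopping so that it contributes $O(\Omega^{-1})$, and close with Cauchy--Schwarz and Gr\"onwall. Your extra care with the stopped processes $t\wedge\tau_K$ and with the separate boundedness argument for $Z(t)$ only makes explicit steps the paper treats implicitly.
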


Theorem \ref{main_thm} implies an upper bound on the variance of the coupled paths.  To see this, note that Assumption  \ref{ass:stable} implies there exists $\bar{C}_2(T)$ indepndent of $K$ and $\Omega$  such that. 
\begin{equation*}
\expect[||X^{\Omega}(T) - Z(T)||_1^2\bfone{\{\tau_K<T\}}] \le \bar{C}_2(T)C(T)K^{-1}. 
\end{equation*}
Following an argument in \cite{Chevallier2016}, for each $\gamma \in (0,1)$ we can select $K$ such that 
\begin{align*}
\expect[||X^{\Omega}(T) - Z(T)||_1^2\bfone\{\tau_K< T \}] &\le  \gamma \expect[||X^{\Omega}(T) - Z(T)||_1^2]. 
\end{align*} 
It follows that 
\begin{equation*}
\expect[||X^{\Omega}(T) - Z(T)||_1^2]  \le \frac{1}{1-\gamma}O(\Omega^{-1}) = O(\Omega^{-1}). 
\end{equation*}

Letting $L_f$ denote the Lipschitz constant of $f$, we have 
\begin{align*}
V_{(X,Z^h)}^{\Omega} &= \var(f(X_i^{\Omega}(T)) - f(Z_i^h(T))) \\
&\le 2\expect[|f(X^{\Omega}_i(T)) - f(Z_i(T))|^2] + 2\expect[|f(Z^h_i(T)) - f(Z_i(t))|^2] \\
&\le 2L_f^2\expect[||X^{\Omega}(T) - Z(T)||^2] +2  L_f^2\expect[||Z^h(T)-Z(T)||^2].
\end{align*}
The first term is $O(h)$ (which is $o(\Omega^{-1})$) assuming our numerical integration scheme is convergent.

\subsection{Complexity analysis}%%%%%%%%%%%%%%%%%%%
 When assumptions \ref{ass_seperation} and \ref{ass:stable} are satisfied in a network without feedback, 
 equation (\ref{var_coupled}) and Theorem \ref{main_thm} imply that
 \begin{equation*}
V_{\text{cmc}} \sim M_1^{-1}V_{(X,Z^h)}^{\Omega} = M_1^{-1}\Omega^{-1}.
\end{equation*}
To make this asymptotically order $\varepsilon^{2}$ we set 
\begin{equation*}
M_1 = \Omega^{-1}\varepsilon^{-2} = \varepsilon^{1/\delta - 2}
\end{equation*}
 The complexity of this estimator is then 
\begin{equation*}
\sfC_{\text{coupled}} = \varepsilon^{1/\delta - 2}\sfC_{(X,Z^h)} + \sfC_{\widehat{Q}_{Z}(h)}.
\end{equation*}
As noted earlier, the rare and abundant species can be simulated separately using algorithm \ref{alg_nofb}. Defining
\begin{align*}
\sfC_{Y} & := \text{Complexity of Step \ref{alg_nofb_step1} in algorithm \ref{alg_nofb}}\\
\sfC_{X|Y} & :=  \text{Complexity of Step \ref{alg_nofb_step2} in algorithm \ref{alg_nofb}}\\
\sfC_{Z^h|Y} & :=  \text{Complexity of Step \ref{alg_nofb_step3} in algorithm \ref{alg_nofb}}
 %\sfC_{X|Y} &\approx  \expect[ \text{ \# of computations to simulate $\{X(t)\}_{0 \le t \le T}$ given $\{Y(t)\}_{0 \le t \le T}$  }]\\
% \sfC_{Z^h|Y} & \approx \expect[\text{ \# of computations to simulate $\{Z^h(t)\}_{0 \le t \le T}$ given $\{Y(t)\}_{0 \le t \le T}$ }]\\
\end{align*}
we have
\begin{equation*}
\sfC_{(X,Z^h)} =  \sfC_{X|Y}+ \sfC_{Z^h|Y} +\sfC_{Y}.
\end{equation*}
On the other hand, for the crude MC estimator each path is obtained from the first two steps in algorithm \ref{alg_nofb}, hence 
\begin{equation*}
\sfC_{\rm crude} =  \varepsilon^{-2}(\sfC_{X|Y} +\sfC_{Y}).
\end{equation*}
Therefore
\begin{align*}
\sfC_{\rm coupled}  &= \varepsilon^{1/\delta}\sfC_{\rm crude} + \varepsilon^{1/\delta-2} \sfC_{(Z^h|Y)} + \sfC_{\widehat{Q}_Z(h)}
\end{align*}

Using a high order numerical integration scheme (for example, a Runga-Kutta method) renders $\sfC_{Z^h|Y}$ negligible at leading order, while $\sfC_{X|Y} = O(\varepsilon^{-1/\delta})$.
 Since the rates affecting $Y(t)$ do not scale with $\Omega$,  we have $\sfC_{Y}  = O(1)$.
This means that 
\begin{equation*}
\sfC_{\rm crude}  = O(\varepsilon^{-2-1/\delta})
\end{equation*}
while
\begin{equation*}
\sfC_{\rm coupled} = O(\varepsilon^{-2}) +  \sfC_{Q_{Z,h}}.
\end{equation*}
We work under the realistic assumption that $\sfC_{Q_{Z,h}} = o(\varepsilon^{-2})$, and therefore $
\sfC_{\rm coupled} = O(\varepsilon^{-2})$.

%%%%%%%%%%%%%%%%%%%%%%%%%%%%%%%%%%%%%%%%%%%%%%%%%%%%%%%%%%%%%%%%
\section{Extension to models with feedback}\label{sec_couplingfb}

Constructing an effective coupling for a model with feedback is more subtle. This is because we cannot simply reuse the realization $\{Y(t)\}_{0 \le t \le T}$, as this would generally not yield a probabilistic coupling. In other words, if we let $(X^{\Omega}(t),Y^X(t))$ and $(Z(t),Y^Z(t))$ be given by (\ref{sde_scaled}) and (\ref{pdmp}) where we have replaced $Y(t)$ with $Y^X(t)$ and $Y^Z(t)$ to emphasize these are different paths, it is generally not the case that $Y^X(t)$ and $Y^Z(t)$ are equivalent in any standard sense of equivalence for stochastic processes. Formulating the problem in this way suggests that the appropriate generalization of the coupling (\ref{coupling_nofb}) in this setting involves a coupling of $Y^X(t)$ and $Y^Z(t)$. Indeed, if we can construct a coupling that keeps $\Delta Y^{\Omega}(t) := Y^X(t)-Y^Z(t)$ small, then 
\begin{equation*}
(X^{\Omega}(t),Y^X(t)) = (X^{\Omega}(t),\Delta Y^{\Omega}(t)+Y^Z(t))
\end{equation*}
should remain close to $(Z(t),Y^Z(t))$, and when $\Delta Y^{\Omega}(t)  = 0$ for all $t\ge0$ we retrieve a coupling of the form (\ref{coupling_nofb}).
  \par

The method we will use to couple $Y^X(t)$ and $Y^Z(t)$ is known in the probability literature as the \emph{split coupling}  \cite{anderson2015} and has appeared in the context of coupling $\tau$-leaping approximations with different values of $\tau$ \cite{anderson2015ion,anderson2015}.
Let us introduce the split coupling in the context of a much simpler problem, namely the problem of coupling two homogenous Poisson processes with propensities $\alpha_1$ and $\alpha_2$. We continue to use the convention that $\Pi_i(t)$ are independent unit rate Poisson process, so the two processes we wish to couple can be written 
\begin{equation*}
Y_1(t) = \Pi_1(\alpha_1 t),\quad Y_2(t) = \Pi_2(\alpha_2 t). 
\end{equation*}
The idea of the split coupling is to break up these reaction channels into two parts, one that is common to both the processes, and another that is specific to a given process. In this way, the coupling of $Y_1(t)$ and $Y_2(t)$ will be driven by three Poisson processes: one counting the events that occur in both $Y_1(t)$ and $Y_2(t)$, one counting the events occurring only in $Y_1(t)$, and another counting the events occurring only in $Y_2(t)$. 
In order to write down the explicit representations of these counting processes we introduce the function 
\begin{equation*}
\rho(\alpha_1,\alpha_2) :=\alpha_1-  \alpha_2\wedge \alpha_1.
\end{equation*}
Now we observe that for two Poisson processes $\Pi_3(t)$ and $\Pi_4(t)$
\begin{equation*}
\Pi_3((\alpha_1 \wedge \alpha_2) t) + \Pi_4(\rho(\alpha_1,\alpha_2)t) 
\end{equation*}
is identical in distribution to $Y_1(t)$ due to the additive property of the Poisson processes.  
It should then be clear that (after introducing one additional Poisson processe $\Pi_5(\cdot)$), a coupled representation $(Y_1(t),Y_2(t))$ of $Y_1(t)$ and $Y_2(t)$ is given by 
\begin{align*}
Y_1(t)&=\Pi_3((\alpha_1 \wedge \alpha_2)t) + \Pi_4(\rho(\alpha_1,\alpha_2)t) \\
Y_2(t)&=\Pi_3((\alpha_1 \wedge \alpha_2)t) + \Pi_5(\rho(\alpha_2,\alpha_1)t)
\end{align*}
Here $\Pi_3(\cdot)$ is the common counting process described above that counts the events occurring in both $Y_1(t)$ and $Y_2(t)$, while 
$\Pi_4(\cdot)$ and $ \Pi_5(\cdot)$ count the events only occurring in one process. If $\alpha_1 \approx \alpha_2$, then $\rho(\alpha_1,\alpha_2)$ and 
$\rho(\alpha_2,\alpha_1)$ will be small, so that there are relatively few events uncommon to both process, and hence $Y_1(t)$ and $Y_2(t)$ remain close. \par

Returning to the problem of coupling $Y_X(t)$ and $Y_Z(t)$, the split coupling can be constructed in exactly the same manner for each reaction channel indexed by $\mcJ_1$. That is, for each $j \in \mcJ_1$ we replace two uncoupled counting processes of the form
\begin{align*}
&\Pi_{1,j}\left( \int_0^t \alpha_j(X^{\Omega}(s),Y^X(s))ds \right),\quad
\Pi_{2,j}\left( \int_0^t \alpha_j(Z(s),Y^Z(s))ds \right),
\end{align*}
where $\Pi_{1,j}(t)$ and $\Pi_{2,j}(t)$ are independent unit rate Poisson process, with 
\begin{align*}
&\Pi_{3,j}\left( \int_0^t \alpha_j(Z(s),Y^Z(s)) \wedge \alpha_j(X^{\Omega}(s),Y^X(s))ds  \right)\\ 
&\quad\quad \quad+\Pi_{4,j} \left(\Omega \int_0^t \rho(\alpha_j(X^{\Omega}(t),Y^X(s)),\alpha_j(Z(s),Y^Z(s)))ds \right)\\
&\Pi_{3,j}\left( \int_0^t \alpha_j(Z(s),Y^Z(s)) \wedge \alpha_j(X^{\Omega}(s),Y^X(s))ds  \right)\\ 
&\quad\quad \quad+\Pi_{5,j} \left(\Omega \int_0^t\rho( \alpha_j(Z(s),Y^Z(s)), \alpha_j(X^{\Omega}(t),Y^X(s)))ds \right)
\end{align*}
where $\Pi_{3,j}(t)$, $\Pi_{4,j}(t)$ and $\Pi_{5,j}(t)$ are all independent unit rate Poisson processes. 

After substituting  the coupled reaction channels into (\ref{sde_scaled}b) and (\ref{pdmp}b), with $\Pi_j\rightarrow \Pi_{1,j}$ and  $\Pi_j\rightarrow \Pi_{2,j}$, respectively,  we obtain their coupled representation 
\begin{subequations}\label{coupling_fb}
\begin{align}
\begin{split}\label{coupling_fb_a}
Y^X(t) &= Y(0) + \sum_{j\in \mcJ_1}\Pi_{3,j} \left(\int_0^t  \alpha_j( Z(s),Y^Z(s)) \wedge \alpha_j(X^{\Omega}(t),Y^X(s))ds \right)u_j\\
&+\sum_{j\in \mcJ_1}\Pi_{4,j} \left( \int_0^t\rho(  \alpha_j(X^{\Omega}(s),Y^X(s)),\alpha_j(Z(s),Y^Z(s)))ds \right)u_j\\
\end{split}\\
\begin{split}\label{coupling_fb_b}
Y^Z(t) &= Y(0) + \sum_{j\in \mcJ_1}\Pi_{3,j} \left(\int_0^t \alpha_j(Z(s),Y^Z(s)) \wedge \alpha_j(X^{\Omega}(t),Y^X(s))ds \right)u_j\\
&+ \sum_{j\in \mcJ_1}\Pi_{5,j} \left(\int_0^t \rho(\alpha_j(Z(s),Y^Z(s)),\alpha_j(X^{\Omega}(s),Y^X(s)))ds \right)u_j\\
\end{split}
\end{align}
\begin{align}
\label{coupling_fb_c}
X^{\Omega}(t) &= X^{\Omega}(0) + \sum_{j\in \mcJ_0}\Pi_j \left(\Omega  \int_0^t\alpha_j(X^{\Omega}(s),Y^X(s))ds \right)k_j \\
\label{coupling_fb_d}
Z(t) &= Z(0) + \sum_{j\in \mcJ_0} \int_0^t\alpha_j(Z(s),Y^Z(s))ds k_j.
\end{align}
\end{subequations}

 The structure of the coupled process is similar to that of the PDMP (\ref{pdmp}) in that there is a piecewise deterministic component, $Z(t)$, driven by a jump process whose evolution is governed by (\ref{coupling_fb_a}-\ref{coupling_fb_c}). In other words, (\ref{coupling_fb}) can be rewritten in the form of (\ref{pdmp}) by replacing $Y(t)$ with $(Y^X(t),Y^Z(t),X(t))$ and adjusting the rates accordingly. Viewing the coupled paths in this way is convenient since we already know how to simulate a generic PDMP. In particular, to generate sample paths of the coupled process (\ref{coupling_fb}) we simply use Algorithm \ref{alg_pdmp} with the appropriate replacements.  \par 

In order to extend Theorem \ref{main_thm} to the coupled processes in equation (\ref{coupling_fb}), recall that our original motivation for coupling $Y^X(t)$ and $Y^Z(t)$ was so that the resulting coupled process would look like a perturbation of the coupling (\ref{coupling_nofb}). Specifically, the perturbations come from the poisson processes $\Pi_{4,j}(\cdot)$ and $\Pi_{5,j}(\cdot)$, and if these terms vanish, $Y^X(t) = Y^Z(t)$, so that we retrieve the coupling of equation (\ref{coupling_nofb}). 
Using the process $\Delta Y^{\Omega}(t)$ defined above, we can write equations (\ref{coupling_fb_c}) and (\ref{coupling_fb_d})  as
\begin{align*}
X^{\Omega}(t) &= X^{\Omega}(0) + \sum_{j\in \mcJ_0}\Pi_j \left(\Omega  \int_0^t\alpha_j(X^{\Omega}(s),Y^X(s))ds \right)k_j \\
Z(t) &= Z(0) + \sum_{j \in \mcJ_0} \int_0^t\alpha_j(Z(s),Y^X(s)+\Delta Y^{\Omega}(s) )k_j ds .
\end{align*}
In the proof of Theorem \ref{main_thm} we relied on a Lipschitz bound  of the form (\ref{lip}). Recalling the properties of the rates $\alpha_j$, the appropriate generalization of the Lipschitz bound in the present context is
\begin{equation*}
|\alpha_j(x_1,y) - \alpha_j(x_2,y+\Delta y)| \le L_{\alpha_j}(K)||x_1-x_2||+ \sum_{p=1}^{\deg(\zeta_j(y))}a_{j,p} ||x_2||||\Delta y||^p.
\end{equation*}
where $a_{j,p}$ are constants independent of $\Omega$. 
Squaring this inequality,
\begin{align}\label{fb_ineq2}
\begin{split}
&|\alpha_j(x_1,y) - \alpha_j(x_2,y+\Delta y)|^2\\ &\quad \quad \quad \le L_{\alpha_j}(K)^2 ||x_1-x_2||^2+2\sum_{p=1}^{\infty}a_{j,p}L_{\alpha_j}(K)||x_1-x_2|| ||x_2||||\Delta y||^p + \left( \sum_{p=1}^{\deg(\zeta_j(y))}a_{j,p} ||x_2||||\Delta y||^p\right)^2
\end{split}
\end{align}
 It is clear that we now need to bound not only the moments of $||\Delta Y^{\Omega}(t)||_1$, as well as their products with  $||X^{\Omega}(t)-Z(t)||_1$, in order to obtain a bound on the complexity of the coupled Monte Carlo estimator. Note that $Z(t)$ is bounded independent of $\Omega$, so the $||x_2||$ terms in  (\ref{fb_ineq2}) do not play a significant role.  For simplicity, we focus on the case where $\zeta_j(y)$ is linear ($\deg(\zeta_j(y))=1$), while the general case will be analyzed in a forthcoming more technical paper. The following Lemma, which is proved in Appendix \ref{ap:proof2}, provides us with the result we need. 
\begin{lem} \label{lem:yfb}
Suppose Assumptions \ref{ass_seperation} and \ref{ass:stable} hold and let $(Y^X(t),Y^Z(t),X^{\Omega}(t),Z(t))$ be given by (\ref{coupling_fb}). Assume $\zeta_j(y)$ is linear for each $j \in \mcJ_2$. Then
% there exists a constant $\bar{C}(T)$ independent of $\sigma$, $K$ and $\Omega$ such that 
\begin{subequations}\label{coupling_bound}
\begin{align}
\expect[||\Delta Y^{\Omega}(T)||_1^2\bfone{\{\tau_K>T\}}] &= O(\Omega^{-1/2})\label{coupling_bound1}\\
\expect[||X^{\Omega}(T)-Z(t)||_1||\Delta Y^{\Omega}(T)||_1\bfone{\{\tau_K>T\}}] &= O(\Omega^{-1/2})\label{coupling_bound2}
\end{align}
\end{subequations}
\end{lem}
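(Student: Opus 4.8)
To set up, write $D(t):=\|X^{\Omega}(t)-Z(t)\|_1$ and $E(t):=\|\Delta Y^{\Omega}(t)\|_1$, and abbreviate $\alpha_j^X:=\alpha_j(X^{\Omega}(t),Y^X(t))$, $\alpha_j^Z:=\alpha_j(Z(t),Y^Z(t))$. The plan is a coupled Gr\"onwall argument on the stopped second moments $\chi(t):=\expect[D(t\wedge\tau_K)^2]$ and $\psi(t):=\expect[E(t\wedge\tau_K)^2]$, after a preliminary estimate on the first moments. Everything is run on the event $\{t<\tau_K\}$, on which Assumption \ref{ass:stable} gives $\|X^{\Omega}(t)\|_1=O(1)$ and, together with the boundedness of $Z(t)$ and of $Y^X(t),Y^Z(t)$, makes every propensity of order one (with constants depending only on $K$ and $T$) and makes $D(t),E(t)$ bounded there. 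Two observations drive the argument. First, $\Delta Y^{\Omega}$ only jumps through the split channels $\Pi_{4,j},\Pi_{5,j}$ of (\ref{coupling_fb_a})--(\ref{coupling_fb_b}), with combined intensity $\sum_{j\in\mcJ_1}\bigl(\rho(\alpha_j^X,\alpha_j^Z)+\rho(\alpha_j^Z,\alpha_j^X)\bigr)=\sum_{j\in\mcJ_1}|\alpha_j^X-\alpha_j^Z|$ and jump size at most $C_u:=\max_j\|u_j\|_1$. Second, by the generalized Lipschitz bound preceding (\ref{fb_ineq2}) together with the assumed linearity of $\zeta_j$, $|\alpha_j^X-\alpha_j^Z|\le L_{\alpha_j}(K)\,D(t)+a_{j,1}\|Z(t)\|_1\,E(t)\lesssim D(t)+E(t)$ on $\{t<\tau_K\}$, for $j$ in both $\mcJ_0$ and $\mcJ_1$.

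\emph{First moments.} As in the proof of Theorem \ref{main_thm}, write $X^{\Omega}(t)-Z(t)=\sum_{j\in\mcJ_0}\bigl(\int_0^t(\alpha_j^X-\alpha_j^Z)\,ds\bigr)k_j+\sum_{j\in\mcJ_0}M_j(t)\,k_j$, where $M_j(t):=\Omega^{-1}\widetilde\Pi_j\bigl(\Omega\int_0^t\alpha_j^X\,ds\bigr)$ and $\widetilde\Pi_j$ is the compensated (mean-zero) Poisson martingale. Stopping at $\tau_K$ and using the $L^2$-isometry for the compensated Poisson integral, $\expect[M_j(t\wedge\tau_K)^2]=\Omega^{-1}\expect\bigl[\int_0^{t\wedge\tau_K}\alpha_j^X\,ds\bigr]=O(\Omega^{-1})$, so $\expect[|M_j(t\wedge\tau_K)|]=O(\Omega^{-1/2})$. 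Combining this with the Lipschitz bound above and with $\expect[E(t\wedge\tau_K)]\le C_u\,\expect\bigl[\int_0^{t\wedge\tau_K}\sum_{j\in\mcJ_1}|\alpha_j^X-\alpha_j^Z|\,ds\bigr]$, then adding, gives $\expect[D(t\wedge\tau_K)]+\expect[E(t\wedge\tau_K)]\lesssim\int_0^t\bigl(\expect[D(s\wedge\tau_K)]+\expect[E(s\wedge\tau_K)]\bigr)ds+O(\Omega^{-1/2})$, hence by Gr\"onwall $\expect[D(t\wedge\tau_K)],\expect[E(t\wedge\tau_K)]=O(\Omega^{-1/2})$ for $t\in[0,T]$.

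\emph{Second moments.} Applying the Dynkin formula to $E(\cdot)^2$ for the stopped process and using $(E\pm\delta)^2-E^2\le 2C_uE+C_u^2$ for $|\delta|\le C_u$ together with the intensity bound, then Young's inequality $DE\le\tfrac12(D^2+E^2)$ and the first-moment estimate to absorb the terms linear in $D,E$, one obtains $\psi(t)\lesssim\int_0^t(\chi(s)+\psi(s))\,ds+O(\Omega^{-1/2})$. Squaring the representation of $X^{\Omega}-Z$ from Step~1, applying Cauchy--Schwarz to the drift integral, $|\alpha_j^X-\alpha_j^Z|^2\lesssim D^2+E^2$, and $\expect[M_j(t\wedge\tau_K)^2]=O(\Omega^{-1})$, one obtains $\chi(t)\lesssim\int_0^t(\chi(s)+\psi(s))\,ds+O(\Omega^{-1})$. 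Adding the two and applying Gr\"onwall yields $\chi(t)+\psi(t)=O(\Omega^{-1/2})$ uniformly on $[0,T]$. Since $E(T\wedge\tau_K)=E(T)$ on $\{\tau_K>T\}$ and the discarded terms are nonnegative, $\expect[E(T)^2\bfone{\{\tau_K>T\}}]\le\psi(T)=O(\Omega^{-1/2})$, which is (\ref{coupling_bound1}); the same reasoning gives $\expect[D(T)^2\bfone{\{\tau_K>T\}}]\le\chi(T)=O(\Omega^{-1/2})$, and then (\ref{coupling_bound2}) follows from Cauchy--Schwarz, $\expect[D(T)E(T)\bfone{\{\tau_K>T\}}]\le\sqrt{\chi(T)\psi(T)}=O(\Omega^{-1/2})$.

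The main obstacle is getting the rate right. Unlike Theorem \ref{main_thm}, where the coupling is exact in the environment and one recovers $O(\Omega^{-1})$, here the Dynkin bound for $\psi$ unavoidably carries a single power of the split intensity, which contains a single power of $D$; since $\expect[D(t\wedge\tau_K)]$ is only $O(\Omega^{-1/2})$ --- the square root of the Poisson-martingale variance --- the $O(\Omega^{-1/2})$ rate cannot be improved by this argument, and the content of the proof is showing this loss happens exactly once rather than compounding. This in turn requires care with the localization: one must verify that the stopped compensated Poisson processes remain martingales (so the $L^2$ identity applies), that all propensities and the differences $D,E$ are genuinely bounded on $\{t<\tau_K\}$ with $K$-dependent constants (so $L_{\alpha_j}(K)$ may legitimately be used), and that passing from the stopped moments back to the $\bfone{\{\tau_K>T\}}$-localized moments in the statement only discards nonnegative quantities.
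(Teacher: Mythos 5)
Your argument is correct, and its first-moment backbone is the same as the paper's: a coupled Gr\"onwall bound on $\expect[\|X^{\Omega}-Z\|_1]$ and $\expect[\|\Delta Y^{\Omega}\|_1]$, with the $O(\Omega^{-1/2})$ inhomogeneity coming from the $L^1$ size of the compensated Poisson fluctuation (this is exactly the content of Lemmas \ref{lem:a1}--\ref{lem:a2} and Theorem \ref{thm:a1} in the appendix, including your observation that $\rho(\alpha_1,\alpha_2)+\rho(\alpha_2,\alpha_1)$ is controlled by $|\alpha_1-\alpha_2|$). Where you genuinely diverge is the second-moment stage. The paper never runs a Gr\"onwall argument at the level of second moments: it bounds $\|\Delta Y^{\Omega}(t)\|_1^2$ by sums of squared Poisson counts $\Pi_{4,j}(A_{1,j})^2+\Pi_{5,j}(A_{2,j})^2$, notes that the internal times $A_{i,j}$ are almost surely bounded on $\{t<\tau_K\}$, and invokes a dedicated lemma (Lemma \ref{lem:a3}, built on an identity from the Chevallier--Fricks--Hohenegger analysis) to convert these squared counts into expected internal times, which are then $O(\Omega^{-1/2})$ directly by the first-moment theorem --- so the proof of (\ref{coupling_bound1}) never needs $\expect[\|X^{\Omega}-Z\|_1^2]$ at all. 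Your Dynkin-formula route instead couples $\chi$ and $\psi$ through the Young-inequality cross term, which forces you to prove the second-moment bound on $X^{\Omega}-Z$ simultaneously; the payoff is that you obtain the bound $\expect[\|X^{\Omega}(T)-Z(T)\|_1^2\bfone\{\tau_K>T\}]=O(\Omega^{-1/2})$ of Theorem (\ref{thm:pathbound_fb}) as a byproduct, a statement the paper proves separately and rather tersely. Finally, for (\ref{coupling_bound2}) the paper simply uses that $\Delta Y^{\Omega}$ is almost surely bounded (via $\|Y\|_1\le c$ from Assumption \ref{ass:stable}) together with the first-moment bound on $\|X^{\Omega}-Z\|_1$, whereas you use Cauchy--Schwarz on the two second moments; both give the stated rate, the paper's being marginally cheaper and yours avoiding any appeal to the almost-sure bound at that step. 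Your closing remarks on localization (stopped compensated processes remaining martingales, $K$-dependent bounds on propensities, discarding only nonnegative terms) are exactly the points the paper handles via the stopped-time formulations and the smoothing formula, so no gap there.
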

Using Lemma \ref{lem:yfb} and (\ref{fb_ineq2}), we can obtain the following result.
\begin{thm}
Suppose Assumptions \ref{ass_seperation} and \ref{ass:stable} hold and let $(Y^X(t),Y^Z(t),X^{\Omega}(t),Z(t))$ be given by (\ref{coupling_fb}).  Moreover, assume $\zeta_j(y)$ is linear for each $j \in \mcJ_2$. Then

\begin{equation}\label{thm:pathbound_fb}
\expect[||X^{\Omega}(T) - Z(T)||_1^2\bfone{\{\tau_K>T\}}] =  O(\Omega^{-1/2}).
\end{equation}
\end{thm}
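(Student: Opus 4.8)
The plan is to mimic the Gr\"onwall argument behind Theorem \ref{main_thm}, now carrying along the extra environment-mismatch terms produced by $\Delta Y^{\Omega}(t)$ and controlling them with Lemma \ref{lem:yfb}. Set $E^{\Omega}(t):=X^{\Omega}(t)-Z(t)$, so that $E^{\Omega}(0)=0$ since the two components of the coupling are started from the same state. Writing $\Pi_j(u)=\widetilde{\Pi}_j(u)+u$ with $\widetilde{\Pi}_j$ the compensated unit-rate Poisson process, and using $Y^Z=Y^X-\Delta Y^{\Omega}$, equations (\ref{coupling_fb_c})--(\ref{coupling_fb_d}) give
\begin{equation*}
E^{\Omega}(t) = \sum_{j\in\mcJ_0}\Omega^{-1}\widetilde{\Pi}_j\!\left(\Omega\int_0^t\alpha_j(X^{\Omega}(s),Y^X(s))\,ds\right)k_j + \sum_{j\in\mcJ_0}\int_0^t\left[\alpha_j(X^{\Omega}(s),Y^X(s))-\alpha_j(Z(s),Y^Z(s))\right]k_j\,ds .
\end{equation*}

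To accommodate the indicator $\bfone\{\tau_K>T\}$, I would pass to the stopped process. Since $\{\tau_K>T\}\subseteq\{\tau_K>t\}$ for $t\le T$, it suffices to bound $g(T)$, where $g(t):=\expect[\|E^{\Omega}(t\wedge\tau_K)\|_1^2]$, because $\expect[\|E^{\Omega}(T)\|_1^2\bfone\{\tau_K>T\}]\le g(T)$. On the event $\{s<\tau_K\}$ one has $\|X^{\Omega}(s)\|_1\le K\|X^{\Omega}(0)\|_1=O(K)$ and, by Assumption \ref{ass:stable}, $\|Y(s)\|_1\le c$; since each $\alpha_j$ is a polynomial in $x$ with $O(1)$ coefficients, this makes $\alpha_j(X^{\Omega}(s),Y^X(s))$, and likewise $\alpha_j(Z(s),Y^Z(s))$, bounded by a constant depending only on $K$ and $T$ on $[0,\tau_K)$.

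Next I would estimate $g(t)$ by squaring the decomposition (Cauchy--Schwarz on the finite sum over $\mcJ_0$, Jensen on the time integral) and taking expectations of the stopped quantities. The compensated-Poisson terms contribute $O(\Omega^{-1})$: by the quadratic-variation identity, $\expect[\widetilde{\Pi}_j(\Omega\int_0^{t\wedge\tau_K}\alpha_j\,ds)^2]=\Omega\,\expect[\int_0^{t\wedge\tau_K}\alpha_j\,ds]$, which is $O(\Omega)$ by the boundedness of $\alpha_j$ on $[0,\tau_K)$, so the prefactor $\Omega^{-2}$ yields $O(\Omega^{-1})$. For the drift terms I would invoke the squared Lipschitz estimate (\ref{fb_ineq2}) with $x_1=X^{\Omega}(s)$, $x_2=Z(s)$, $\Delta y=-\Delta Y^{\Omega}(s)$ and $\deg(\zeta_j)=1$; since $\|Z(s)\|_1=O(1)$, it splits $|\alpha_j(X^{\Omega}(s),Y^X(s))-\alpha_j(Z(s),Y^Z(s))|^2$ into (i)~$L_{\alpha_j}(K)^2\|E^{\Omega}(s)\|_1^2$, (ii)~a constant times $\|E^{\Omega}(s)\|_1\|\Delta Y^{\Omega}(s)\|_1$, and (iii)~a constant times $\|\Delta Y^{\Omega}(s)\|_1^2$. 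Integrating over $[0,t\wedge\tau_K]$ and taking expectations: term (i) produces the Gr\"onwall loop $\int_0^t g(s)\,ds$; terms (ii) and (iii), which live on $\{s<\tau_K\}$, are precisely the quantities controlled by Lemma \ref{lem:yfb}, so by (\ref{coupling_bound2}) and (\ref{coupling_bound1}) (applied at each time $s\le T$, the relevant constants from Assumption \ref{ass:stable} being nondecreasing in $T$) their total contribution is $O(\Omega^{-1/2})$.

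Collecting the estimates gives $g(t)\le O(\Omega^{-1/2})+C\int_0^t g(s)\,ds$ with $C$ depending only on $T$, $K$, the $L_{\alpha_j}(K)$ and the $a_{j,1}$, and Gr\"onwall's inequality then yields $g(T)\le O(\Omega^{-1/2})e^{CT}=O(\Omega^{-1/2})$, which is the claim. Within this argument the real content is Lemma \ref{lem:yfb}, proved separately in Appendix \ref{ap:proof2}; granting it, the only delicate points are the stopping-time bookkeeping --- working with $E^{\Omega}(t\wedge\tau_K)$ so that the compensated-Poisson terms stay martingales and the propensities stay bounded --- and the observation that the rate is now dictated by the $\|\Delta Y^{\Omega}\|_1$-driven terms, which are of order $\Omega^{-1/2}$ and dominate the $O(\Omega^{-1})$ martingale fluctuations. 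This is exactly why the feedback case loses a half power of $\Omega$ relative to Theorem \ref{main_thm}.
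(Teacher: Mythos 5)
Your proposal is correct and follows essentially the same route as the paper, whose proof is just the one-line instruction to rerun the Gr\"onwall argument of Theorem \ref{main_thm} with the squared Lipschitz estimate (\ref{fb_ineq2}) inserted and Lemma \ref{lem:yfb} controlling the $\|\Delta Y^{\Omega}\|_1$ cross terms. Your version simply makes explicit the stopping-time bookkeeping and the $O(\Omega^{-1})$ martingale versus $O(\Omega^{-1/2})$ mismatch-term accounting that the paper leaves implicit.
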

\begin{proof}
We get this result after inserting (\ref{fb_ineq2}) into the proof of Theorem \ref{thm:pathbound_fb} and applying Lemma \ref{lem:yfb}. \end{proof}

It follows that $V_{(X,Z^h)}^{\Omega}  = O(\Omega^{-1/2})$ when $X(t)$ and $Z^h(t)$ are coupled according to (\ref{coupling_fb}). 

\subsection{Complexity analysis}
In light of the analysis for the coupled MCE without feedback, and Theorem (\ref{thm:pathbound_fb}), we would expect that $O(\varepsilon^{-2-1/2\delta})$ complexity is achieved in the presence of feedback. To obtain this result one only needs to show that  $\sfC_{(X,Z^h)}  = O(\Omega)$
where the path is obtained from Algorithm \ref{alg_pdmp}. It is safe to assume that the contribution to the complexity of the numerical integration steps in Algorithm \ref{alg_pdmp} is $o(\Omega)$, but since we are solving for the coupled paths (\ref{coupling_fb}) the frequency of the jump events computed by solving \ref{chv_ode} scales with $\Omega$. These events then come to dominate the cost of this Algorithm so that we have $\sfC_{(X,Z^h)}  = O(\Omega)$, and as expected, $C_{\rm coupled} = O(\varepsilon^{-2-1/2\delta})$. Note that this result implies the computations are sped up by a factor of $O(\varepsilon^{-1/2\delta})$ in contrast to the  $O(\varepsilon^{-1/\delta})$ speed up when there is no feedback. 

\section{Numerical Examples}\label{sec:numerical}%%
To illustrate that our theoretical complexity results can be achieved in practice we have applied our coupled MCE to estimate $\expect[X^{\Omega}(20)]$ for a range of system sizes. The results for Examples \ref{ex_catalyst}, \ref{ex:gene} and \ref{ex:gene_fb} are displayed in Figure \ref{fig}, where we have used the couplings (\ref{coupling_nofb}) for the first two examples, and (\ref{coupling_fb}) for the third.\footnote{These computations were performed on a 2014 MacBook Pro with a 2.6 GHz Intel Core i5 processor using Python 3} For Examples \ref{ex_catalyst} we have solved for $\widehat{Q}_{Z}(h)$ by integrating the moment flow equations (\ref{moment_flow}), while in Examples \ref{ex_catalyst} and \ref{ex:gene_fb} we have used the solution to (\ref{dCK}) given in \cite{hufton2015}.

\begin{figure}[t!]
\centering
\includegraphics[width=15cm]{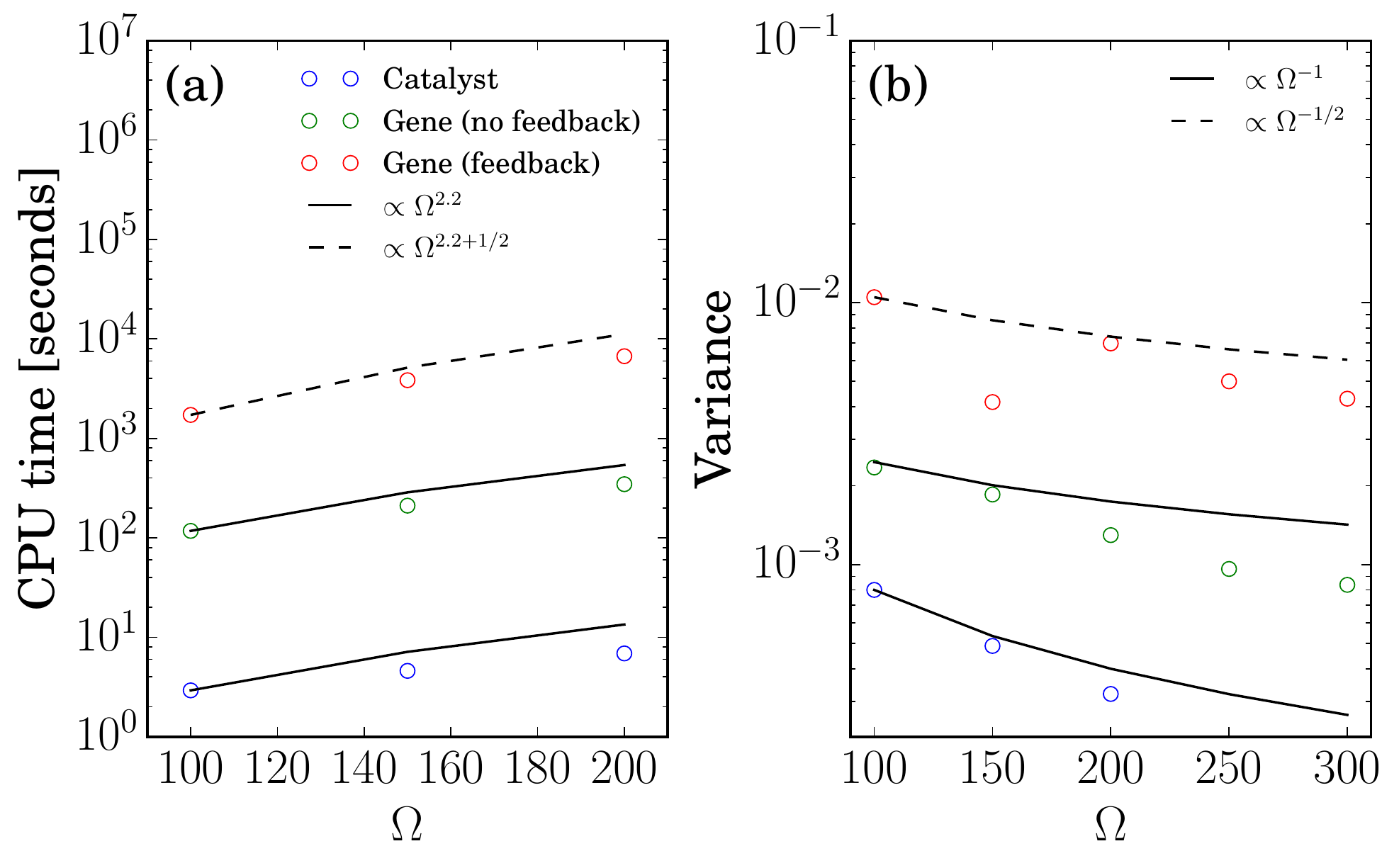}
\caption{ (a) The CPU time used to compute an estimate of $\expect[X^{\Omega}(10)]$ for each example, as well as the theoretical bounds. We have used $\delta = 1.1$. (b) The estimated variances using $10^5$ samples of the coupled paths, along with the theoretical bounds. 
Note the $\log$ scales on the time axis. We have used the LSODA algorithm to perform the numerical integration of (\ref{chv_ode}). Parameter values used are $\alpha = 1$, $\beta =1$, $\lambda =1.$ $\mu = 0.8$  for Example \ref{ex_catalyst}; and $\alpha = 1$, $\beta =1$, $\lambda =4.$ $\mu = 5.$ for  Examples \ref{ex:gene} and \ref{ex:gene_fb}. 
}\label{fig}
\end{figure}

%\begin{figure}[t!]
%\centering
%%\includegraphics[width=10cm]{fig2.pdf}
%\caption{The same as Figure \ref{fig_cat1} with $\delta = 1.5$.
%}\label{fig_cat2}
%\end{figure}

%%%%%%%%%%%%%%%%%%%%%%%%%%%%%%%%%%%%%%%%%%%%%%%%%%%%%%%%%%%%%%%%%
\section{Conclusions}
Variance reduction in Monte Carlo estimators through probabilistic coupling has been used extensively in the scientific computing literature \cite{anderson2015ion,anderson2015,anderson2012,goodman2009}. However, there has been little work exploring the application of simplified models to reduce variances in Monte Carlo estimators for complex chemical reaction networks. We have extended the idea of variance reduction to models with partial thermodynamic limits in which the qualitative behavior of the full stochastic model is well approximated by a PDMP. Such population models arise in the biological and chemical sciences whenever the population can be decomposed into a group of abundant species, and a group of rare species.  The rare species often act as an environment that controls the dynamics of a large population, such as how the discrete state of a gene controls the production of a protein. When there is no feedback from the abundant species in the population, the coupling of the full model to the PDMP is straightforward. and we have shown that the coupling is indeed effective in reducing the variance of the Monte Carlo estimator. We have also derived an effective coupling for models with feedback, which applies to an extremely wide class of population models.  \par
 
Our results suggest that approximate stochastic models, such as the ones studied rigorously in \cite{jahnke2012,crudu2009,Chevallier2016} may be useful in the context of variance reduction for exact models. It would be particularly fruitful to extend our work to develop computational tools that are specifically tailored to spatial process. In particular, the reaction diffusion master equation (RDME) is a continuous time Markov chain approximation of   reaction diffusion processes for which there is a great deal of interest in simulating efficiently \cite{Isaacson2009a,Isaacson2009}. Other future directions include extending the coupling to other model reductions, such as the quasi-steady state, an idea that was briefly explored in \cite{anderson2012}.    \par

\section*{Acknowledgements}

This work was supported by the National Science Foundation (DMS 1613048 and RTG 1148230)

\begin{appendix}

\appendix

%%%%%%%%%%%%%%%%%%%%%%%%%%%%%%%%%%%%%%%%%%%%%%%%%%%%%%%%%%%%%%%%
\section{Non-Monte Carlo based methods for a PDMP}\label{sec_nonmce}

The efficiency of the Monte Carlo methods introduced in this paper depend on our ability to obtain statistics of the PDMP without resorting to Monte-Carlo based methods. In this section we give a brief overview of the non-Monte Carlo based analysis of the PDMP.  All non-Monte Carlo based methods are derived from the forward Kolmogorov equation. While the Kolmogorov equation for the full model may be prohibitively complex due to the combinatorial complexity when there are many particles, the PDMP is often very simple.  We note that the problem of developing efficient numerical methods for solving the Kolmogorov equation for the full model and the PDMP is a very active area of research, but beyond the scope of this work. \par
The main object of interest will be the generator of the process $(Z(t),Y(t))$,
\begin{align}
\begin{split}
A f(z,y) &=\left( \sum_{j \in \mcJ_0} \alpha_j(z, y) k_j \right)\cdot \nabla_{z} f(z,y) \\
&+\sum_{j \in \mcJ_1} \alpha_j(y-u_j)f(z,y-u_j) -\alpha_j(y)f(z,y).
\end{split}
\end{align}

\subsection{Differential Chapman Kolmogorov equations}
The forward Kolmogorov equation, or \emph{differential Chapman Kolmogorov} (dCK) equation
\begin{equation}\label{dCK}
\frac{d}{dt}p(z,y,t)= A^{\dagger} p(z,y,t)
\end{equation}
gives the evolution of the density 
\begin{equation*}
p(z,y,t) = \prob(Z(t) = z,Y(t) = y). 
\end{equation*}
Here  the superscript ${\dagger}$ denotes the Hilbert adjoint operator with respect to the appropriate inner product. The dCK equations are a system of hyperbolic PDEs, and will generally only be useful for numerical simulation when they are finite dimensional. This occurs exactly when the rare species are conserved ($||Y||$ is constant) and the methods presented in this paper apply when this condition holds. The situation is also tricky when the abundant species are not conserved because the domain of $(\ref{dCK})$ may be unbounded. However, we can often solve the PDE on a bounded domain provided the deterministic dynamics have a trapping region. For example, see \cite{hufton2015}.

For example \ref{ex_catalyst} with $c=1$ the generator is
\begin{align*}
Af(z,0) &= -\alpha z\partial_z f(z,0)   - \lambda f(z,0)  + \mu f(z,1)\\
Af(z,1) &= \beta (1-z)\partial_z f(z,1)  +\lambda f(z,0)  - \mu f(z,1)
\end{align*}
so the dCK equation is
\begin{align*}
\frac{d}{dt}p(z,0,t) &=\alpha \partial_z [zp(z,0,t)]- \lambda p(z,0,t)  + \mu p(z,1,t)\\
\frac{d}{dt}p(z,1,t) &=- \beta \partial_z [(1-z) p(z,1,t)]+\lambda p(z,0,t)  - \mu p(z,1,t),
\end{align*}
along with the reflecting boundary conditions 
\begin{equation*}
p(0,y,t) = p(1,y,t)=0.
\end{equation*}

%\subsubsection*{Example}%%
%For Example \ref{ex_catalyst} we have
%\begin{align*}
%Af(z,0) &= -\alpha \Omega \partial_z f(z,0)   - \lambda f(z,0)  + \mu f(z,1)\\
%Af(z,1) &=  (\alpha \Omega+\beta z)\partial_z f(z,1)  +\lambda f(z,0)  - \mu f(z,1)
%\end{align*}
%

\subsection{Moment flow equations}%%%%%%%%%%%%%%%%%%%
While solving (\ref{dCK}) numerically is preferable compared to Monte Carlo simulations of the PDMP (\ref{pdmp}) for chemical reaction networks, significant  computational gains can be made when one is only interested in expected values of polynomial function of the process $Z(t)$. This is done by deriving moment equations which are ODEs for the dynamic evolution of the moments and point correlation of $(Z(t),Y(t))$. We derive these from the backward Kolmogorov (bK) equation
\begin{align}
\frac{d}{dt}\expect[f(Z(t),Y(t))] = \expect[A f(Z(t),Y(t))].\label{bK}
\end{align}
For notationally simplicity we derive the moment equations when $z \in \reals$, although this derivation can easily be generalized \cite{hespanha2005}.  It is not difficult to see that the polynomial (\ref{mass_action}) is invariant under $A$.  In particular, if we set $f_{y'}^{(r)}(z,y) = z^r\bfone{\{y=y'\}}$ and write the polynoomial in the form
\begin{equation*}
\alpha_j(x,y) = \zeta_j(y)\sum_{i = 1}^{\deg(\alpha_j)}\alpha_{j,i}x^i,
\end{equation*}
then
\begin{align*}
A f_{y'}^{(r)}(z,y) &= \sum_{j \in \mcJ_0} \alpha^r_j(z, y) k_j  rf_{y'}^{(r-1)}(z,y) 
+ \sum_{j \in \mcJ_1} \alpha_j(y-u_j)f_{y'}^{(r)}(z,y-u_j)  -\alpha_j(y)f_{y'}^{(r)}(z,y)\\
&= \sum_{j \in \mcJ_0} \zeta_j(y) rk_{j}\sum_{i= 1}^{\deg(\alpha_j)} \alpha_{j,i}f_{y'}^{(r-1+i)}(z,y) \\
&+ \sum_{j \in \mcJ_1}\zeta_j(y-u_j)\sum_{i= 1}^{\deg(\alpha_j)} \alpha_{j,i}f_{y'}^{(r+i)}(z,y-u_j)  -\zeta_j(y)\sum_{i= 1}^{\deg(\alpha_j)} \alpha_{j,i}f_{y'}^{(r+i)}(z,y)
\end{align*}
This expression for $A f_{y'}^{(r)}(z,y) $ is particularly useful when $A$ preserves the degree of polynomials. This will happen when there is no feedback and $\deg(\alpha_j)= 1$. 
Then, setting $m^{(r)}_{y}(t) = \expect[f_{y}^{(r)}(Z(t),Y(t)) ]$, taking the exception of $A f_{y'}^{(r)}(Z(t),y)$ with respect to $Z$, and applying the bK equation  (\ref{bK}) yields the closed system of linear differential equations,
\begin{equation}\label{moment_flow}
\begin{split}
\frac{d}{dt}m^{(r)}_{y}(t)  &= \sum_{j \in \mcJ_0} \zeta_j(y) rk_{j}\sum_{i\in\{0,1\}}\alpha_{j,i}m_{y}^{(r-1+i)}(t)\\
&+\sum_{j \in \mcJ_1}\zeta_j(y-u_j)m_{y+u_i}^{(r)}(t) -\zeta_j(y)m_{y}^{(r)}(t)
\end{split}
\end{equation}
which we refer to as the \emph{moment flow equations}. 
If $\deg(\alpha_j)>1$ or the reaction network has feedback, then the system is not closed and one has to resort to moment closure techniques, which generally approximate the infinite-dimensional system of linear equations by a finite-dimensional system of nonlinear equations. See for example \cite{hespanha2005,singh2007derivative} for one approach. The efficiency of implementing a moment closure scheme depends heavily on the specific problem, and a detailed discussion of the computational issues associated with solving coupling moment equations is beyond the scope of this work.
\par
One complication that arises in deriving moment equations is that the conservation rules used to derive the dCK equation may be applicable to the moment equations. For example, taking $c=1$ in example \ref{ex_catalyst} we need to derive the moment flow equations for the first moments 
\begin{equation*}
m^{(1)}_{i,j}(t) = \expect[X_i(t)\bfone{Y(t) = j}]
\end{equation*}
where $X_i(t) = i +X(t)(1-2i)$, $i\in \{0,1\}$. 
These are
\begin{align*}
\frac{d}{dt}m^{(1)}_{0,0}(t)  &= -\alpha m_{0,0}^{(1)}(t) + \lambda m_{0,1}^{(1)}(t) - \mu m_{0,0}^{(1)}(t)\\
\frac{d}{dt}m^{(1)}_{0,1}(t)  &=  \beta m_{1,1}^{(1)}(t)   - \lambda m_{0,1}^{(1)}(t) + \mu m_{0,0}^{(1)}(t)\\
\frac{d}{dt}m^{(1)}_{1,0}(t)  &= \alpha m_{0,0}^{(1)}(t) + \lambda m_{1,1}^{(1)}(t) - \mu m_{1,0}^{(1)}(t)\\
\frac{d}{dt}m^{(1)}_{1,1}(t)  &=  -\beta m_{1,1}^{(1)}(t)   - \lambda m_{1,1}^{(1)}(t) + \mu m_{1,0}^{(1)}(t).
\end{align*}
The original system has two conservation rules but the moment equations only have one, so the system above could be reduced to a three dimensional system, but we cannot obtain a two dimensional system of moment equations from the two dimensional dCK equation. Instead we have derived these from the dCK equations without conservation.

\section{Proof of Theorem \ref{thm:pathbound} }\label{ap:proof1}
Throughout this Appendix $||\cdot|| = ||\cdot||_1$ and $t < T$ is fixed. Let $L_{\alpha_j}(K)$ denote the maximum Lipschitz constant of $\alpha_j(\cdot,y)$ over all values of $y$, which is independent of $\Omega$ but depends quadratically on $K$.
Define the constants
\begin{align*}
B_1&= \sum_{j\in \mcJ_0}||k_j|| L_{\alpha_j}(K) \\
B_2&= \max_j ||k_j|||\mcJ_0|
\end{align*} and let $\widehat{\Pi}_i: = \Pi_i(t) - t$ denote the centered Poisson process. Note that both $B_1$ and $B_2$ are independent of $\Omega$.
\begin{proof}[Proof of Theorem \ref{thm:pathbound}]
 Using the triangle inequality and adding and subtracting the appropriate terms, we have
\begin{align*}
||X^{\Omega}(t) - Z(t)|| &\le \max_{j\in \mcJ_0} ||k_j|| 
\sum_{j \in \mcJ_0} \left|\Omega^{-1} \Pi_j\left(\Omega \int_0^t \alpha_j(X^{\Omega}(s),Y(s))ds \right) - \int_0^t \alpha_j(Z(s),Y(s))ds \right|\\
&=  \max_{j\in \mcJ_0} ||k_j||\sum_{j \in \mcJ_0}  \Bigl|  \int_0^t \alpha_j(X^{\Omega}(s),Y(s)) - \alpha_j(Z(s),Y(s))ds \\
 &\quad\quad\quad +\Omega^{-1}\widehat{ \Pi}_j\left(\Omega \int_0^t \alpha_j(X^{\Omega}(s),Y(s))ds \right)\Bigl| \\
 &\le  \max_{j\in \mcJ_0} ||k_j||  \sum_{j \in \mcJ_0} \int_0^t| \alpha_j(X^{\Omega}(s),Y(s)) - \alpha_j(Z(s),Y(s))|ds\\
 &\quad\quad\quad  +\max_{j\in \mcJ_0} ||k_j||  \sum_{j \in \mcJ_0}\Omega^{-1}\left|\widehat{ \Pi}_j\left(\Omega \int_0^t \alpha_j(X^{\Omega}(s),Y(s))ds \right)\right|\\
 &\le  \max_{j\in \mcJ_0} ||k_j||  \sum_{j \in \mcJ_0} \int_0^tL_{\alpha_j}(K)|| X^{\Omega}(s)- Z(s)||ds\\
 &\quad\quad\quad + \max_{j\in \mcJ_0} ||k_j||  \sum_{j \in \mcJ_0}\Omega^{-1}\left|\widehat{ \Pi}_j\left(\Omega \int_0^t \alpha_j(X^{\Omega}(s),Y(s))ds \right)\right|\\ 
  &\le  B_1 \int_0^t|| X^{\Omega}(s)- Z(s)||ds+
  \max_{j\in \mcJ_0}||k_j||  \sum_{j \in \mcJ_0}\Omega^{-1}\left|\widehat{ \Pi}_j\left(\Omega \int_0^t \alpha_j(X^{\Omega}(s),Y(s))ds \right)\right|.
\end{align*}
Then squaring both sides of our inequality and applying Cauchy-Schwarz yields 
\begin{align*}
||X^{\Omega}(t) - Z(t)||^2& \le 3B_1^2t\int_0^t|| X^{\Omega}(s)- Z(s)||^2ds \\
&\quad\quad\quad + 3B_2^2\Omega^{-2}\sum_{j \in \mcJ_0}\left(\widehat{ \Pi}_j\left(\Omega \int_0^t \alpha_j(X^{\Omega}(s),Y(s))ds \right)\right)^2.
\end{align*}
Since the quadratic variation of the Martingale $\widehat{\Pi}_i(\cdot)$ is $\Pi_j(\cdot)$ and $\widehat{\Pi}_i(\cdot)^2-\Pi_j(\cdot)$ is Martingale \cite{ethier86}, the optional stopping theorem implies 
\begin{equation*}
\expect\left[\left(\widehat{ \Pi}_j\left(\Omega \int_0^t \alpha_j(X^{\Omega}(s),Y(s))ds \right)\right)^2\right]
=\expect\left[ \Pi_j\left(\Omega \int_0^t \alpha_j(X^{\Omega}(s),Y(s))ds \right)\right],
\end{equation*}
and noting that $ t \le  \tau_K$ an application of the smoothing formula \cite{bremaud2013} gives 
\begin{align*}
\expect\left[ \Pi_j\left(\Omega \int_0^t \alpha_j(X^{\Omega}(s),Y(s))ds \right)\right]
&\le  \expect\left[ \Pi_j\left(\Omega t L_{\alpha_j}(K) K\right)\right] =\Omega t L_{\alpha_j}(K) K 
\end{align*}
where $K$ is given in (\ref{tau_K}). Now set 
\begin{equation*}
B_3 = 4T^2\sum_{j \in \mcJ_0} L_{\alpha_j}(K).
\end{equation*}
Taking the expectation of the original expression and applying Fubini's theorem we get 
\begin{equation*}
\expect[||X^{\Omega}(t) - Z(t)||^2] \le   B_1 \int_0^t\expect[|| X^{\Omega}(s)- Z(s)||^2]ds
+  \Omega^{-1}B_2^2B_3K. 
\end{equation*}
After noting that all constants are $O(1)$, the result follows from Gr{\"o}nwall's inequality. 
\end{proof}

\section{Proof of Lemma \ref{coupling_bound}}\label{ap:proof2}

In this Appendix $a \le_C b$ will indicate that $a \le C b$ for some constant $C$ that is independent of $\Omega$ (but may depend on $K$ and $t$). Note that we are referring to different constants $C$ in each use of this notation.  In order to increase readability we will not keep track of the specific values of $C$, but the proof of Theorem \ref{thm:pathbound} gives a blueprint for how these constants can be derived. 
In the following proofs we will refer to the random variables
\begin{align*}
A_{1,j}(t) &:=  \int_0^t\rho(  \alpha_j(X^{\Omega}(s),Y^X(s)),\alpha_j(Z(s),Y^Z(s)))ds\\
A_{2,j}(t) &:=  \int_0^t \rho(\alpha_j(Z(s),Y^Z(s)),\alpha_j(X^{\Omega}(s),Y^X(s)))ds. 
\end{align*}
Of course 
\begin{equation*}
\Delta Y^{\Omega}(t) =  \sum_{j \in \mcJ_2}\left(\Pi_{4,j}(A_{1,j}(t) ) -  \Pi_{5,j}(A_{2,j}(t) )\right)u_j.
\end{equation*}
%We will make use of the Doob-meyer decomposition 
%\begin{equation*}
%\Delta Y^{\Omega}(t)  = M^{\Omega}(t)  + F^{\Omega}(t) 
%\end{equation*}
%where $M^{\Omega}(t)$ is a martingale and 
%\begin{equation*}
% F^{\Omega}(t)  :=  \sum_{j \in \mcJ_2}\left(A_{1,j}(t) - A_{1,j}(t)\right)u_j
%\end{equation*}
% Finally, the quadratic variation of $M^{\Omega}(t)$, denoted $[M^{\Omega}](t)$, is given by 
%\begin{equation*}
% [M^{\Omega}](t)  =   \sum_{j \in \mcJ_2}u_j \otimes u_j\left(\Pi_{4,j}(A_{1,j}(t) ) -  \Pi_{5,j}(A_{1,j}(t) )\right).
% \end{equation*}
% 
%\begin{lem}\label{lem:a0}
%\begin{equation*}
%\expect[||M^{\Omega}(t)||_1^2] = O(\Omega^{-1})
%\end{equation*}
%\end{lem}
%\begin{proof}
%By the Burkholder-Gundy-Davis inequality 
%\begin{equation*}
%\expect[||M^{\Omega}(t)||_1^2] \le_C \expect[[M^{\Omega}(t)]]  =  
%\sum_{j \in \mcJ_2}u_j \otimes u_j\expect[\Pi_{4,j}(A_{1,j}(t) ) -  \Pi_{5,j}(A_{1,j}(t) ]
%\end{equation*}
%\end{proof}
%%

%

\begin{lem}\label{lem:a1}

Suppose Assumptions \ref{ass_seperation} and \ref{ass:stable} hold and let $(Y^X(t),Y^Z(t),X^{\Omega}(t),Z(t))$ be given by (\ref{coupling_fb}). Moreover, assume $\zeta_j(y)$ is linear for each $j \in \mcJ_2$. Then 
\begin{equation*}
\expect[||X^{\Omega}(t) - Z(t)||\bfone\{\tau_K> T\}] \le_C\int_0^t\expect[||\Delta Y^{\Omega}(s)||]ds +O(\Omega^{-1/2})
\end{equation*}
\end{lem}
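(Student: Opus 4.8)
The plan is to reuse the Gr{\"o}nwall scheme from the proof of Theorem \ref{thm:pathbound}, now carrying along the extra term that appears because the $\mcJ_0$-rates driving $X^{\Omega}$ and $Z$ are evaluated at the two distinct environment paths $Y^X$ and $Y^Z$ rather than at a common one.

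First I would decompose the pathwise error. From (\ref{coupling_fb_c})--(\ref{coupling_fb_d}) and the identity $\Omega^{-1}\Pi_j(\Omega u) = u + \Omega^{-1}\widehat{\Pi}_j(\Omega u)$, with $\widehat{\Pi}_j := \Pi_j(\cdot) - \cdot$ the centered Poisson process, $X^{\Omega}(t) - Z(t)$ is a sum over $j \in \mcJ_0$ of $k_j$ times $\int_0^t[\alpha_j(X^{\Omega}(s),Y^X(s)) - \alpha_j(Z(s),Y^Z(s))]\,ds + \Omega^{-1}\widehat{\Pi}_j(\Lambda_j(t))$, where $\Lambda_j(t) := \Omega\int_0^t \alpha_j(X^{\Omega}(s),Y^X(s))\,ds$. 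Taking $||\cdot||_1$, multiplying by $\bfone\{\tau_K>t\}$, and using $\bfone\{\tau_K>t\} \le \bfone\{\tau_K>s\}$ for $s \le t$ together with the boundedness of $X^{\Omega}$ on $\{\tau_K>s\}$ (which holds since $||X^{\Omega}(0)||_1 = O(1)$ by Assumption \ref{ass_seperation}), I would bound the drift difference using the feedback Lipschitz inequality preceding (\ref{fb_ineq2}); with $\deg(\zeta_j) = 1$ it reads $|\alpha_j(x_1,y) - \alpha_j(x_2,y+\Delta y)| \le L_{\alpha_j}(K)||x_1-x_2|| + a_{j,1}||x_2||\,||\Delta y||$. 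Applying it with $x_1 = X^{\Omega}(s)$, $x_2 = Z(s)$, $y = Y^X(s)$, $y+\Delta y = Y^Z(s)$, and using that $Z(s)$ is bounded independently of $\Omega$ (so the $||x_2||$ factor is $\le_C 1$), this yields
$$||X^{\Omega}(t) - Z(t)||\,\bfone\{\tau_K>t\} \le_C \int_0^t \bigl(||X^{\Omega}(s) - Z(s)|| + ||\Delta Y^{\Omega}(s)||\bigr)\bfone\{\tau_K>s\}\,ds + \Omega^{-1}\sum_{j \in \mcJ_0}||k_j||\,\bigl|\widehat{\Pi}_j(\Lambda_j(t))\bigr|\,\bfone\{\tau_K>t\}.$$

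Next I would control the martingale term exactly as in the proof of Theorem \ref{thm:pathbound}: on $\{\tau_K>t\}$ the rates are bounded on $[0,t]$, so $\Lambda_j(t) = \Lambda_j(t\wedge\tau_K) \le_C \Omega$ pathwise; the optional-stopping / quadratic-variation argument then gives $\expect[\widehat{\Pi}_j(\Lambda_j(t))^2\bfone\{\tau_K>t\}] \le \expect[\Lambda_j(t\wedge\tau_K)] \le_C \Omega$, and Cauchy--Schwarz yields $\Omega^{-1}\expect[|\widehat{\Pi}_j(\Lambda_j(t))|\bfone\{\tau_K>t\}] \le \Omega^{-1}\bigl(\expect[\widehat{\Pi}_j(\Lambda_j(t))^2\bfone\{\tau_K>t\}]\bigr)^{1/2} = O(\Omega^{-1/2})$. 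Setting $g(t) := \expect[||X^{\Omega}(t)-Z(t)||\,\bfone\{\tau_K>t\}]$, taking expectations in the displayed inequality, using $\expect[||X^{\Omega}(s)-Z(s)||\bfone\{\tau_K>s\}] = g(s)$ and $\expect[||\Delta Y^{\Omega}(s)||\bfone\{\tau_K>s\}] \le \expect[||\Delta Y^{\Omega}(s)||]$, I obtain $g(t) \le_C \int_0^t g(s)\,ds + \int_0^t \expect[||\Delta Y^{\Omega}(s)||]\,ds + O(\Omega^{-1/2})$; since the forcing term is nondecreasing, Gr{\"o}nwall's inequality gives $g(t) \le_C \int_0^t \expect[||\Delta Y^{\Omega}(s)||]\,ds + O(\Omega^{-1/2})$, and because $\bfone\{\tau_K>T\} \le \bfone\{\tau_K>t\}$ for $t \le T$ the claim follows.

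The argument is largely bookkeeping, but two points need care: the stopping-time indicator must be kept in the form $\bfone\{\tau_K>s\}$ inside the time integral (not $\bfone\{\tau_K>T\}$), so that the Gr{\"o}nwall inequality closes on $g$; and the centered-Poisson term contributes $O(\Omega^{-1/2})$ rather than $O(\Omega^{-1})$, because here one estimates the $L^1$-norm of a centered Poisson variable with intensity of order $\Omega$. This is precisely the mechanism responsible for the half-power loss in $\Omega$ in the feedback case relative to Theorem \ref{thm:pathbound}.
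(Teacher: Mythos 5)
Your proposal is correct and follows essentially the same route as the paper: the same decomposition of $X^{\Omega}(t)-Z(t)$ into a drift difference plus centered Poisson martingales, the same feedback Lipschitz bound producing the $\int_0^t\expect[||\Delta Y^{\Omega}(s)||]\,ds$ forcing term (with $||Z(s)||$ absorbed since it is bounded independently of $\Omega$), the same Jensen/Cauchy--Schwarz plus optional-stopping estimate giving the $O(\Omega^{-1/2})$ contribution, and the same Gr{\"o}nwall closure. Your handling of the indicator $\bfone\{\tau_K>s\}$ inside the time integral is in fact slightly more careful than the paper's ``take $t<\tau_K$'' shorthand, but the substance is identical.
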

\begin{proof}
Take $t <\tau_K$. Following the proof of Theorem \ref{thm:pathbound}, we have
\begin{align*}
||X^{\Omega}(t) - Z(t)|| 
  &\le_C  \int_0^t|| X^{\Omega}(s)- Z(s)||ds+\int_0^t||\Delta Y^{\Omega}(s)||||Z(s)||ds\\
  &\quad\quad\quad +  \sum_{j \in \mcJ_0}\Omega^{-1}\left|\widehat{ \Pi}_j\left(\Omega \int_0^t \alpha_j(X^{\Omega}(s),Y(s))ds \right)\right|.
\end{align*}

Using Jenson's inequality and the Martingale stopping theorem, 
\begin{align*}
&\expect \left[\left|\widehat{ \Pi}_j\left(\Omega \int_0^t \alpha_j(X^{\Omega}(s),Y(s))ds \right)\right| \right]\\
&\quad\quad\quad \le \expect \left[\left|\widehat{ \Pi}_j\left(\Omega tL_{\alpha_j}Kt\right)\right| \right]\le
\left( \expect \left[\left(\widehat{ \Pi}_j\left(\Omega tL_{\alpha_j}Kt\right)\right)^2\right]\right)^{1/2} \\
 &\quad\quad\quad= \left( \expect \left[\Pi_j\left(\Omega tL_{\alpha_j}Kt\right)\right]\right)^{1/2} = O(\Omega^{1/2})
 \end{align*}
from which is follows that
\begin{align*}
\expect[||X^{\Omega}(t) - Z(t)||]
  &\le_C \int_0^t\expect[|| X^{\Omega}(s)- Z(s)||]ds+\int_0^t\expect[||\Delta Y^{\Omega}(s)||||Z(s)||]ds +O(\Omega^{-1/2})
  \end{align*}
  so that the result is obtained from Gr{\"o}nwall's inequality and the fact that $Z(t)$ is bounded and independent of $\Omega$.
  \end{proof}

\begin{lem}\label{lem:a2}
Suppose Assumptions \ref{ass_seperation} and \ref{ass:stable} hold and let $(Y^X(t),Y^Z(t),X^{\Omega}(t),Z(t))$ be given by (\ref{coupling_fb}). Moreover, assume $\zeta_j(y)$ is linear for each $j \in \mcJ_2$. Then
\begin{equation*}
\expect[||\Delta Y^{\Omega}(t)||\bfone\{\tau_K> T\}] \le_C  \int_0^t\expect [||X^{\Omega}(s)-Z(s)||]ds.%+\int_0^t\expect[||\Delta Y^{\Omega}(s)||]ds
\end{equation*}
\end{lem}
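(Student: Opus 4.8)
The plan is to bound $||\Delta Y^{\Omega}(t)||$ pathwise by the total number of jumps ``uncommon'' to the two coupled environment processes, take expectations by exploiting the martingale property of the random time changes, and then close the estimate with the feedback Lipschitz bound together with Gr\"onwall's inequality. Since $\bfone\{\tau_K>T\}\le\bfone\{\tau_K>t\}$ for $t\le T$, it suffices to control $v(t):=\expect[||\Delta Y^{\Omega}(t)||\bfone\{\tau_K>t\}]$, and I work throughout with the processes stopped at $\tau_K$.

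First, since each $\Pi_{4,j}$ and $\Pi_{5,j}$ is nonnegative, the triangle inequality applied to $\Delta Y^{\Omega}(t)=\sum_{j\in\mcJ_2}(\Pi_{4,j}(A_{1,j}(t))-\Pi_{5,j}(A_{2,j}(t)))u_j$ gives $||\Delta Y^{\Omega}(t)||\le\sum_{j}||u_j||(\Pi_{4,j}(A_{1,j}(t))+\Pi_{5,j}(A_{2,j}(t)))$. On $\{\tau_K>t\}$ one has $t\wedge\tau_K=t$, so $\Pi_{4,j}(A_{1,j}(t))\bfone\{\tau_K>t\}\le\Pi_{4,j}(A_{1,j}(t\wedge\tau_K))$; since $\widehat\Pi_{4,j}(s)=\Pi_{4,j}(s)-s$ is a martingale (the fact already used in the proof of Theorem \ref{thm:pathbound}), $A_{1,j}$ is continuous, adapted and increasing, and $t\wedge\tau_K$ is a bounded stopping time, optional stopping yields $\expect[\Pi_{4,j}(A_{1,j}(t\wedge\tau_K))]=\expect[A_{1,j}(t\wedge\tau_K)]$, and similarly for $\Pi_{5,j}$, $A_{2,j}$. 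Using the elementary identity $\rho(a,b)+\rho(b,a)=|a-b|$,
\[
A_{1,j}(t\wedge\tau_K)+A_{2,j}(t\wedge\tau_K)=\int_0^{t\wedge\tau_K}\big|\alpha_j(X^{\Omega}(s),Y^X(s))-\alpha_j(Z(s),Y^Z(s))\big|\,ds .
\]

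Next, for $s<\tau_K$ the (unsquared) Lipschitz-type estimate stated just before (\ref{fb_ineq2}), with $\deg(\zeta_j)=1$, gives
\[
\big|\alpha_j(X^{\Omega}(s),Y^X(s))-\alpha_j(Z(s),Y^Z(s))\big|\le L_{\alpha_j}(K)||X^{\Omega}(s)-Z(s)||+a_{j,1}||Z(s)||\,||\Delta Y^{\Omega}(s)|| ,
\]
and since $Z(s)$ is bounded on $[0,T]$ uniformly in $\Omega$ (as already used for Lemma \ref{lem:a1}) the right-hand side is $\le_C ||X^{\Omega}(s)-Z(s)||+||\Delta Y^{\Omega}(s)||$. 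Combining this with the previous display, applying Fubini, and using $\bfone\{s<\tau_K\}\le1$ on the $X^{\Omega}-Z$ term, one obtains $v(t)\le_C\int_0^t\expect[||X^{\Omega}(s)-Z(s)||]\,ds+\int_0^t v(s)\,ds$. Since $v$ is finite (each $A_{i,j}(t\wedge\tau_K)$ is a.s. bounded, the rates being bounded on $[0,\tau_K)$) and $t\mapsto\int_0^t\expect[||X^{\Omega}(s)-Z(s)||]\,ds$ is nondecreasing, Gr\"onwall's inequality absorbs the last term (the resulting $e^{CT}$ is swallowed by $\le_C$), giving $v(t)\le_C\int_0^t\expect[||X^{\Omega}(s)-Z(s)||]\,ds$, which is the claim. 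The only nonroutine point is the feedback Lipschitz step: the $||\Delta Y^{\Omega}||$ contribution it produces on the right means the bound is not literally linear in $||X^{\Omega}-Z||$ and becomes so only after the Gr\"onwall absorption that the $\le_C$ notation quietly hides; everything else — the pathwise decomposition, the optional stopping identities, the $\tau_K$ bookkeeping, and the Fubini manipulations — is a direct transcription of the arguments in the proofs of Theorem \ref{thm:pathbound} and Lemma \ref{lem:a1}.
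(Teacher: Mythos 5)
Your proposal is correct and follows essentially the same route as the paper's proof of Lemma \ref{lem:a2}: bound $||\Delta Y^{\Omega}(t)||$ by the counts of the uncommon Poisson channels, convert to expectations of the internal times $A_{1,j}+A_{2,j}$ via the martingale/optional-stopping property, control that sum by $|\alpha_j(X^{\Omega},Y^X)-\alpha_j(Z,Y^Z)|$ using the $\rho$ identity and the feedback Lipschitz bound (with $||Z||$ bounded), and absorb the resulting $||\Delta Y^{\Omega}||$ term by Gr\"onwall. Your only deviations are cosmetic improvements — the exact identity $\rho(a,b)+\rho(b,a)=|a-b|$ in place of the paper's factor-$2$ bound, and more explicit $\tau_K$/indicator bookkeeping where the paper simply says ``take $t<\tau_K$.''
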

\begin{proof}
Take $t <\tau_K$.
\begin{align*}
||\Delta Y^{\Omega}(t)||_1& \le_C\sum_{j \in \mcJ_1}|\Pi_{4,j}\left(A_{1,j}(t) \right) - \Pi_{5,j}\left(A_{2,j}(t)\right)|
\end{align*}
Since the Poisson processes are positive, 
\begin{align*}
\expect[|\Pi_{4,j}\left(A_{1,j}(t) \right) - \Pi_{5,j}\left(A_{2,j}(t)\right)|]  &\le \expect[\Pi_{4,j}\left(A_{1,j}(t) \right)+ \Pi_{5,j}\left(A_{2,j}(t)\right)] \\
%&\le \expect[\Pi_{4,j}\left(A_{1,j}(t) \right)]+ \expect[\Pi_{5,j}\left(A_{2,j}(t)\right)] \\
&= \expect[A_{1,j}(t)+A_{2,j}(t)].
\end{align*}
Now observe that for any $\alpha_1,\alpha_2>0$,  
\begin{align*}
\rho(\alpha_1,\alpha_2) + \rho(\alpha_2,\alpha_1) &= \alpha_1 - \alpha_1 \wedge \alpha_2+ \alpha_2 - \alpha_1 \wedge \alpha_2\\
&= \alpha_1+\alpha_2-2 \alpha_1 \wedge \alpha_2\\
&\le 2( \alpha_1 \vee \alpha_2 -  \alpha_1 \wedge \alpha_2) = 2|\alpha_1 - \alpha_2|,
\end{align*}
and hence 
\begin{align*}
\begin{split}
&A_{1,j}(t) + A_{2,j}(t)\le   \int_0^t \rho(  \alpha_j(X^{\Omega}(s),Y^X(s)),\alpha_j(Z(s),Y^Z(s)))\\
&\quad\quad\quad\quad\quad\quad\quad\quad+ \rho(\alpha_j(Z(s),Y^Z(s)),\alpha_j(X^{\Omega}(s),Y^X(s)))ds
\end{split}\\
&  \quad\quad\quad\quad\quad\quad \le 2\int_0^t| \alpha_j(X^{\Omega}(s),Y^X(s)) - \alpha_j(Z(s),Y^Z(s)))|\\
&\quad\quad\quad \quad\quad\quad\le_C  \int_0^t||X^{\Omega}(s)-Z(s)||ds+\int_0^t||\Delta Y^{\Omega}(s)||||Z(s)||ds.
\end{align*}
Consolidating our findings, 
\begin{align*}
\expect[||\Delta Y^{\Omega}(t)||]& \le_C  \int_0^t\expect [||X^{\Omega}(s)-Z(s)||]+\int_0^t\expect[||\Delta Y^{\Omega}(s)||||Z(s)||]ds.\end{align*}
Gr{\"o}nwall's inequality once again yields the final result.
\end{proof}

\begin{thm}\label{thm:a1}
Suppose Assumptions \ref{ass_seperation} and \ref{ass:stable} hold in a network with feedback and let $(Y^X(t),Y^Z(t),X^{\Omega}(t),Z(t))$ be given by (\ref{coupling_fb}).  Moreover, assume $\zeta_j(y)$ is linear for each $j \in \mcJ_2$. 
\begin{subequations}
\begin{align}
\expect[||\Delta Y^{\Omega}||_1\bfone{\{\tau_K>T\}}] &= O(\Omega^{-1/2})\label{coupling_bound_a1}\\
\expect[||X^{\Omega}(T) - Z(T)||_1\bfone{\{\tau_K>T\}}]& = O(\Omega^{-1/2}) \label{coupling_bound_a2}
\end{align}
\end{subequations}
\end{thm}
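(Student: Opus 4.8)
The plan is to turn the pair of one-sided estimates in Lemmas~\ref{lem:a1} and~\ref{lem:a2} into a single closed Gr\"onwall inequality. Introduce
\[
u(t) := \expect\big[\|X^{\Omega}(t)-Z(t)\|_1\bfone\{\tau_K>T\}\big],\qquad v(t) := \expect\big[\|\Delta Y^{\Omega}(t)\|_1\bfone\{\tau_K>T\}\big].
\]
On the event $\{\tau_K>T\}$ we have $t<\tau_K$ for every $t\le T$, so $\|X^{\Omega}(t)\|_1< K\Omega^{-1}\|X(0)\|_1=O(1)$ by Assumption~\ref{ass_seperation}, $\|Y^X(t)\|_1,\|Y^Z(t)\|_1\le c$ by Assumption~\ref{ass:stable}, and the clocks $A_{1,j},A_{2,j}$ driving $\Delta Y^{\Omega}$ are bounded by a deterministic multiple of $t$; hence $u$ and $v$ are finite and bounded on $[0,T]$, which is what Gr\"onwall needs. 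With this notation Lemmas~\ref{lem:a1} and~\ref{lem:a2} say, for all $t\le T$,
\[
u(t)\le_C \int_0^t v(s)\,ds + O(\Omega^{-1/2}), \qquad v(t)\le_C \int_0^t u(s)\,ds .
\]

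First I would add the two inequalities: with $\phi:=u+v$ and $C$ the larger implied constant this gives the scalar inequality
\[
\phi(t)\le C\int_0^t\phi(s)\,ds + O(\Omega^{-1/2}),\qquad t\in[0,T].
\]
Gr\"onwall's lemma then yields $\phi(t)\le O(\Omega^{-1/2})e^{CT}=O(\Omega^{-1/2})$ uniformly on $[0,T]$, since $C$ and $T$ do not depend on $\Omega$; evaluating at $t=T$ and using $0\le u\le\phi$ and $0\le v\le\phi$ delivers both~(\ref{coupling_bound_a1}) and~(\ref{coupling_bound_a2}). Equivalently one may substitute Lemma~\ref{lem:a2} into Lemma~\ref{lem:a1} to obtain the iterated bound $u(t)\le_C\int_0^t\!\int_0^s u(r)\,dr\,ds+O(\Omega^{-1/2})$, which is dominated by $O(\Omega^{-1/2})\cosh(\sqrt{C}\,t)$, and then feed this back into Lemma~\ref{lem:a2} to bound $v$.

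I expect essentially no obstacle at this stage, because the genuine work has been front-loaded into the two lemmas: the $O(\Omega^{-1/2})$ term in Lemma~\ref{lem:a1} is the residual of the centered-Poisson estimate $\Omega^{-1}\expect\big|\widehat{\Pi}_j(\Omega\,\cdot)\big|\le\Omega^{-1}\big(\expect\,\Pi_j(\Omega\,\cdot)\big)^{1/2}=O(\Omega^{-1/2})$, and the extra cross term appearing in both lemmas is handled by $\rho(\alpha_1,\alpha_2)+\rho(\alpha_2,\alpha_1)\le 2|\alpha_1-\alpha_2|$ combined with the generalized Lipschitz bound~(\ref{fb_ineq2}); the linearity of each $\zeta_j$ is precisely what leaves only the $p=1$ contribution $\|\Delta Y^{\Omega}\|_1$ in that bound, so no higher moments of $\Delta Y^{\Omega}$ enter the loop. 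The only points requiring care here are bookkeeping ones: that every constant hidden in $\le_C$ and in the $O(\cdot)$ terms is truly $\Omega$-independent, that the indicator $\bfone\{\tau_K>T\}$ is transported consistently through both lemmas (so that $t\le\tau_K$ holds on that event for every $t\le T$), and the a priori finiteness noted above. None of these is serious, so the theorem follows from the two lemmas by this coupled Gr\"onwall argument.
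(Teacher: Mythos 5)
Your argument is correct and matches the paper's own proof in substance: the paper likewise closes the loop between Lemmas \ref{lem:a1} and \ref{lem:a2} (by substituting one into the other to get a single Gr\"onwall inequality for $\expect[\|\Delta Y^{\Omega}(t)\|_1]$, yielding (\ref{coupling_bound_a1}), and then feeding that back in to obtain (\ref{coupling_bound_a2})). Whether you add the two inequalities for $u+v$ or substitute one into the other is an immaterial variation of the same coupled Gr\"onwall argument.
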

\begin{proof}
Combing Lemma \ref{lem:a1} and  Lemma \ref{lem:a2}, we have
\begin{equation*}
\expect[||\Delta Y^{\Omega}(t)||_1] \le_C \int_0^t\expect[||\Delta Y^{\Omega}(s)||_1]ds +O(\Omega^{-1/2})
\end{equation*}
so that Gr{\"o}nwall's inequality yields (\ref{coupling_bound_a1}) and (\ref{coupling_bound_a2}) follows from another application of Lemma \ref{lem:a2}. 
\end{proof}

\begin{lem}\label{lem:a3}
Let $\Pi_1(t)$ and $\Pi_2(t)$ be $\mcF_t$-adapted unit rate Poisson processes and $t_1$ and $t_2$ be $\mcF_t$-stopping times with $t_1\vee t_2 \le D$ almost surely.  Then, 
\begin{equation*}
\expect[\Pi_1(t_1)^2+\Pi_2(t_2)^2]  \le (2D+1)\expect[t_1+t_2].
\end{equation*}
\end{lem}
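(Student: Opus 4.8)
The plan is to reduce the statement to the two elementary martingale facts already exploited in the proof of Theorem~\ref{thm:pathbound}: that $\widehat{\Pi}_i(t):=\Pi_i(t)-t$ is an $\mcF_t$-martingale, and that $\widehat{\Pi}_i(t)^2-\Pi_i(t)$ is an $\mcF_t$-martingale (the optional quadratic variation of $\widehat{\Pi}_i$ is $\Pi_i$). Since $t_i$ is an $\mcF_t$-stopping time with $t_i\le D$ almost surely, both processes are integrable up to time $D$ (indeed $\Pi_i(t_i)\le\Pi_i(D)$ and $\expect[\Pi_i(D)^2]=D+D^2<\infty$), so the optional stopping theorem applies at the bounded stopping time $t_i$ and gives the two identities
\[
\expect[\Pi_i(t_i)]=\expect[t_i],\qquad \expect[\widehat{\Pi}_i(t_i)^2]=\expect[\Pi_i(t_i)]=\expect[t_i].
\]

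The second step is a purely pathwise bound for the square. Writing $\Pi_i(t_i)=\widehat{\Pi}_i(t_i)+t_i$ and expanding,
\[
\Pi_i(t_i)^2=\widehat{\Pi}_i(t_i)^2+2t_i\widehat{\Pi}_i(t_i)+t_i^2=\widehat{\Pi}_i(t_i)^2+2t_i\Pi_i(t_i)-t_i^2\le \widehat{\Pi}_i(t_i)^2+2D\,\Pi_i(t_i),
\]
where the last inequality drops the nonnegative term $t_i^2$ and uses $0\le t_i\le D$ together with $\Pi_i(t_i)\ge 0$. Taking expectations and substituting the two identities from the first step yields $\expect[\Pi_i(t_i)^2]\le \expect[t_i]+2D\,\expect[t_i]=(2D+1)\expect[t_i]$. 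Summing over $i=1,2$ gives precisely $(2D+1)\expect[t_1+t_2]$.

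I do not expect a genuine obstacle here; the only points needing a word of care are the integrability used to invoke optional stopping (immediate from $t_i\le D$) and the measurability hypotheses — namely that each $\Pi_i$ is a unit-rate Poisson process adapted to $\mcF_t$ with independent increments, so that $\widehat{\Pi}_i$ and $\widehat{\Pi}_i^2-\Pi_i$ are bona fide $\mcF_t$-martingales, and that $t_1,t_2$ are $\mcF_t$-stopping times; both are part of the hypotheses. An essentially equivalent route, should one prefer to avoid expanding the square, is to use the compensation formula for $\Pi_i^2$: the process $\Pi_i(t)^2-2\int_0^t\Pi_i(s^-)\,ds-t$ is an $\mcF_t$-martingale, so optional stopping gives $\expect[\Pi_i(t_i)^2]=2\,\expect[\int_0^{t_i}\Pi_i(s^-)\,ds]+\expect[t_i]$, after which the monotonicity of $\Pi_i$ and $t_i\le D$ give $\int_0^{t_i}\Pi_i(s^-)\,ds\le D\,\Pi_i(t_i)$ pathwise and hence the same constant $(2D+1)$.
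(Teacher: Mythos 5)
Your proof is correct and follows essentially the same route as the paper: the identity you obtain by expanding $\Pi_i(t_i)^2=\widehat{\Pi}_i(t_i)^2+2t_i\Pi_i(t_i)-t_i^2$ and applying optional stopping is exactly the identity the paper imports from \cite[Lemma 3.1]{Chevallier2016}, after which both arguments bound $t_i\le D$, use $\Pi_i(t_i)\ge 0$, drop $-t_i^2$, and apply $\expect[\Pi_i(t_i)]=\expect[t_i]$ to reach the constant $(2D+1)$. The only difference is that you rederive that cited identity directly from the martingales $\widehat{\Pi}_i$ and $\widehat{\Pi}_i^2-\Pi_i$, which is a fine, self-contained substitute.
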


\begin{proof}
From \cite[Lemma 3.1]{Chevallier2016} we have
\begin{equation*}
\expect[\Pi_j(t_j)^2]  = 2\expect[\Pi_j(t_j)t_j]  - \expect[t_j^2] + \expect[t_j], 
\end{equation*}
from which it follows that
\begin{align*}
&\expect[\Pi_1(t_1)^2+\Pi_2(t_2)^2] \\
&\quad\quad\quad= 2\expect[\Pi_1(t_1)t_1+\Pi_2(t_2)t_2]  - \expect[t_1^2+t_2^2] + \expect[t_1+t_2]\\
%&\quad\quad\quad \le \expect[\Pi_1(t_1)t_1+\Pi_2(t_2)t_2] + \expect[t_1+t_2]\\
&\quad\quad\quad \le  2\expect[(t_1 \vee t_2)(\Pi_1(t_1)+\Pi_2(t_2))] + \expect[t_1+t_2]\\
&\quad\quad\quad \le 2D\expect[\Pi_1(t_1)+\Pi_2(t_2)] + \expect[t_1+t_2] = (2D+1)\expect[t_1+t_2].
\end{align*}
\end{proof}
\begin{proof}[Proof of Lemma \ref{lem:yfb}]

We first establish (\ref{coupling_bound1}):
\begin{align*}
||\Delta Y^{\Omega}(t)||^2 & \le_C \sum_{j \in \mcJ_2} (\Pi_{4,j}\left(A_{1,j}(t)\right) - \Pi_{5,j}\left(A_{2,j}(t) \right))^2\\
&\le_C \sum_{j \in \mcJ_2} (\Pi_{4,j}\left(A_{1,j}(t)\right) + \Pi_{5,j}\left(A_{2,j}(t) \right))^2\\
&\le_C \sum_{j \in \mcJ_2} \Pi_{4,j}\left(A_{1,j}(t)\right)^2 + \Pi_{5,j}\left(A_{2,j}(t) \right)^2
\end{align*}
In order to apply Lemma \ref{lem:a3} we need to bound the internal times. For $A_{1,j}(t)$, 
\begin{align*}
A_{1,j}(t) &=   \int_0^t  \alpha_j(X^{\Omega}(s),Y^X(s))- \alpha_j(X^{\Omega}(s),Y^X(s)) \wedge \alpha_j(Z(s),Y^Z(s)))ds\\
&\le   \int_0^t  \alpha_j(X^{\Omega}(s),Y^X(s))ds \le  t\alpha_j(K,c). 
\end{align*}
In fact, the same bound holds for $A_{2,j}(t)$. It follows that 
\begin{align*}
\expect[ \Pi_{4,j}\left(A_{1,j}(t)\right)^2 + \Pi_{5,j}\left(A_{2,j}(t) \right)^2] 
&\le (t\alpha_j(K,c)+1)\expect[ \Pi_{4,j}\left(A_{1,j}(t)\right) + \Pi_{5,j}\left(A_{2,j}(t) \right)]\\
&= (t\alpha_j(K,c)+1)\expect[ A_{1,j}(t) + A_{2,j}(t)]\\
&\le_C \int_0^t\expect[||X^{\Omega}(s)-Z(s)||]ds+\int_0^t\expect[||\Delta Y^{\Omega}(s)||]ds
\end{align*}
%\begin{align*}
%||\Delta Y^{\Omega}(t)||& \le_C \sum_{j \in \mcJ_2} |\Pi_{4,j}\left(A_{1,j}(t)\right) - \Pi_{5,j}\left(A_{2,j}(t) \right)|\\
%& \le_C\sum_{j \in \mcJ_2}\int_0^t|\alpha_j(X^{\Omega}(s),Y^X(s)) - \alpha_j(Z(s),Y^Z(s))ds|\\
%&\quad\quad\quad + \left|\widehat{\Pi}_{4,j}\left(A_{1,j}(t)\right) - \widehat{\Pi}_{5,j}\left(A_{2,j}(t) \right)\right|\\
%& \le_C\sum_{j \in \mcJ_2}\Omega \int_0^t||X^{\Omega}(t)-Z(t)||+ ||\Delta Y^{\Omega}(t)||||Z(t)|| \\
%&\quad\quad\quad +| \widehat{\Pi}_{4,j}\left(A_{1,j}(t) \right) - \widehat{\Pi}_{5,j}\left(A_{2,j}(t)\right)|\\
%\end{align*}
%Hence 
%\begin{align*}
%||\Delta Y^{\Omega}(t)||^2 &\le_C \int_0^t||X^{\Omega}(s)-Z(s)||^2ds + \int_0^t||\Delta Y^{\Omega}||^2ds\\
%&\quad\quad\quad +  4\sum_{j \in \mcJ_1}( \widehat{\Pi}_{4,j}\left(A_{1,j}(t)\right) - \widehat{\Pi}_{5,j}\left(A_{2,j}(t) \right))^2
%\end{align*}
%Using the Martingale stopping theorem 
%\begin{equation*}
%\expect[( \widehat{\Pi}_{4,j}\left(A_{1,j}(t)\right) - \widehat{\Pi}_{5,j}\left(A_{2,j}(t) \right))^2]= 
%\expect[\Pi_{4,j}\left(A_1\right) - \Pi_{5,j}\left(A_{1,j}(t) \right)]  = \expect[A_{1,j}(t)- A_{2,j}(t)] 
%\end{equation*}
%

Theorem \ref{thm:a1} now implies (\ref{coupling_bound1}).
%\begin{align*}
%\expect[ ||\Delta Y^{\Omega}(t)||^2] &= O(\Omega^{1/2})
%\end{align*}
% \int_0^t\expect[ ||X^{\Omega}(s)-Z(s)||^2]ds + \int_0^t\expect[ ||\Delta Y^{\Omega}(s)||^2||Z(s)||^2]ds\\
%&\quad\quad\quad +\int_0^t\expect [||X^{\Omega}(s)-Z(s)||]ds+\int_0^t\expect[ ||\Delta Y^{\Omega}(s)||||Z(s)||]ds\\
%&\le_C \int_0^t\expect[ ||X^{\Omega}(s)-Z(s)||^2]ds + \int_0^t\expect[ ||\Delta Y^{\Omega}||^2]ds + O(\Omega^{-1/2}).
%\end{align*}
In order to obtain (\ref{coupling_bound2}), simply note $\Delta Y^{\Omega}(t)$ is bounded by a constant and apply (\ref{coupling_bound_a2}). 
\end{proof}
%Applying Gr{\"o}nwall's inequality,
%\begin{align*}
%\expect[ ||\Delta Y^{\Omega}(t)||^2]\le_C  \int_0^t\expect[ ||X^{\Omega}(s)-Z(s)||^2]ds+O(\Omega^{-1/2})
%\end{align*} 
%
%In order to obtain (\ref{coupling_bound2}) we use Jenson's inequality:
%\begin{align*}
%&\expect[||X^{\Omega}(t) - Z(t)||_1||\Delta Y^{\Omega}(t)||_1] \\
%&\quad\quad\quad \le  (\expect[||X^{\Omega}(t) - Z(t)||_1^2]\expect[||\Delta Y^{\Omega}_1(t)||_1^{2}])^{1/2}\\
%&\quad\quad\quad \le_C   \left(\expect[||X^{\Omega}(t) - Z(t)||_1^2]\left(\int_0^t\expect[ ||X^{\Omega}(s)-Z(s)||^2]ds+O(\Omega^{-1/2})\right)\right)^{1/2}\\
%&\quad\quad\quad\le_C   \left(\int_0^t\expect[||X^{\Omega}(t) - Z(t)||_1^2]\expect[ ||X^{\Omega}(s)-Z(s)||^2]ds+\expect[||X^{\Omega}(t) - Z(t)||_1^2]\right)^{1/2}
%\end{align*}

\end{appendix}

\bibliographystyle{elsarticle-num}

\bibliography{jcp2016-refs}{}

\end{document}